\documentclass[12pt,final,
thmsa,
epsf,a4paper]
{article}
\linespread{1.2}


\newtheorem{theorem}{Theorem}

\newtheorem{corollary}{Corollary}
\newtheorem{definition}{Definition}
\newtheorem{example}{Example}
\newtheorem{lemma}{Lemma}
\newtheorem{remark}{Remark}

\newenvironment{proof}[1][Proof]{\emph{#1.} }{\  \hfill $\square $ \vspace{5 pt}}

\usepackage{stackrel}
\usepackage{amssymb}
\usepackage[dvips]{graphics}
\usepackage{amsmath}

\usepackage{mathpazo}

\usepackage{enumerate}


\usepackage{amsfonts}

\usepackage{amssymb}

\usepackage{enumerate}

\usepackage{mathrsfs}

\usepackage[usenames]{color}

\usepackage{pgf,tikz}

\setcounter{MaxMatrixCols}{10}

\usetikzlibrary{decorations.pathreplacing,angles,quotes}
\usetikzlibrary{arrows.meta}
\usetikzlibrary{arrows, decorations.markings}


\usepackage{hyperref}
\hypersetup{
    colorlinks,
    citecolor=blue,
    filecolor=black,
    linkcolor=blue,
    urlcolor=blue
}


\usepackage{pgf,tikz}
\usetikzlibrary{arrows}

\usepackage[utf8]{inputenc}
\usepackage{amssymb,geometry}
\geometry{tmargin=3 cm, lmargin=2.5 cm, rmargin=2.5 cm, bmargin=3
cm}


\usepackage{fancyhdr}


\usepackage{soul}

\usepackage{emptypage}




\begin{document}

\author{Pablo A. Neme \and Jorge Oviedo}

\title{ON THE SET OF MANY-TO-ONE STRONGLY STABLE FRACTIONAL MATCHINGS}
\maketitle

\begin{abstract}For a many-to-one matching market where firms have strict and $\boldsymbol{q}$-responsive preferences, we give a characterization of the set of  strongly stable fractional matchings as the union of the convex hull of all connected sets of stable matchings. Also, we prove that a strongly stable fractional matching is represented as a convex combination of stable matchings that are ordered in the common preferences of all firms.
\end{abstract}

\noindent\textit{Key words:} Matching Markets; Many-to-one Matching Market; Strongly Stable Fractional Matchings; Linear Programming.\\
\noindent\textit{MSC2000 subject classification:} Primary: 90C05; secondary: 91B68.\\
\maketitle
\section{Introduction.}

A large part of the matching literature studies many-to-one matching markets. The agents in these markets are divided into two disjoint sets: The \textit{many}-side of the market, namely resident doctors, students, workers, etc, and the \textit{one}-side, namely hospitals, colleges, firms, etc.
The main property studied in the matching literature is stability. A matching is called stable if all agents have acceptable partners and there is no unmatched pair (hospital-doctor, college-student, firm-worker, etc.), where both agents would prefer to be matched to each other rather than staying with their current partners under the proposed matching. 
Each agent has a preference list that determines an order over the agents or sets of agents on the other side of the market, with the possibility of staying unmatched. In this paper, the agents on the many-side have $\boldsymbol{q}$-responsive and strict preferences. 

Linear programming is a widely used mathematical tool in matching theory. Each matching can be
represented by an assignment matrix called the \emph{incidence
vector} of the matching. 

 Vande Vate \cite{vate1989linear} and Rothblum \cite{rothblum1992characterization} present a system of linear inequalities that characterizes the set of stable matchings of the marriage market for two different restrictions of the market. Both papers show that the set of stable matchings for the marriage market
corresponds to the set of incidence vectors (integer solutions for linear inequalities). In other words, stable matchings are
exactly the extreme points of the polytope generated by the system of linear
inequalities.
Roth \textit{et al.} \cite{roth1993stable}, for the marriage market, introduce a linear
program that characterizes all stable matchings as the integer solutions.

Linear programming approaches have been developed for the theory of stable matching markets also by Abeledo and Rothblum \cite{abeledo1995courtship} \cite{abeledo1994stable}, Abeledo and Blum \cite{abeledo1996stable}, Abeledo \textit{et al.} \cite{abeledo1996canonical}, Fleiner \cite{fleiner2001matroid}, \cite{fleiner2003stable}, Sethuraman \textit{et al.} \cite{sethuraman2006many}, Sethuraman and Teo \cite{teo1998geometry}, and many others.

Ba\"{i}ou and Balinski \cite{baiou2000stable} present two characterizations of the convex polytope for the many-to-one matching market. We focus on one of these characterizations (the most general one). 

Lotteries over stable matchings have been studied in many instances in the literature. For the marriage market, Roth \textit{et al.} \cite{roth1993stable} studied lotteries over stable matchings via linear programming. 
When the extreme points of the convex polytope generated by the constraints of a linear program are exactly the stable matchings of the market (this is the case, for instance, in the marriage market) a random matching coincides with the concept of stable fractional matching. Roth \textit{et al.} \cite{roth1993stable} define a \textit{stable fractional matching} as a not necessarily integer solution of the linear program. When the extreme points are not all integer, these two concepts are not the same, for instance, in a many-to-one matching market with $\boldsymbol{q}$-responsive and strict preferences. That is to say, a random matching is always a stable fractional matching, but some stable fractional matchings cannot be written as a lottery over stable matchings. Example 1 shows a many-to-one matching market with an extreme point that is not  a stable matching.

Each entry of an incidence vector of a stable fractional matching can be interpreted as the time that each agent spends with one agent on the other side of the market.  
For a stable fractional matching,  it can happen that two agents, one of each side of the market, have an incentive to increase the time that they spend together at the expense of those matched agents that they like less than each other at a stable fractional matching. To study a ``good'' fractional solution, the idea is to avoid this and prevent that agents have incentives to ``block" the stable fractional matching in a fractional way. For a marriage market, Roth \textit{et al.} \cite{roth1993stable} define a \textit{strongly stable fractional matching} as a stable fractional matching that fulfils non-linear equalities  that represent this non-blocking condition mentioned above. In other words, a stable fractional matching that fulfils the non-linear equalities from Roth \textit{et al.} \cite{roth1993stable}, is a strongly stable fractional matching.
Neme and Oviedo \cite{neme2019characterization} give a characterization of the strongly stable fractional matching for the marriage market. Our work extends their result and provides a characterization for the set of many-to-one strongly stable fractional matchings. We extend the strong stability condition from  Roth \textit{et al.} \cite{roth1993stable} to a many-to-one matching market. Our first result states that a strongly stable fractional matching is represented by a convex combination among stable matching that are ordered in the eyes of all firms (Theorem \ref{sucesion decreciente}). Since we focus on one of the characterizations due to Ba\"{i}ou and Balinski \cite{baiou2000stable}, a salient question  now is, are the non-integer extreme points of this convex polytope strongly stable fractional matchings? A corollary of Theorem \ref{sucesion decreciente} answers negatively this question.

In the school choice set-up, strong stability for lotteries has been introduced by Kesten and \"{U}nver \cite{kesten2015theory}, they called it ex-ante stability for lotteries. In this market, they deal with indifferences in the priority of the schools.  Kesten and \"{U}nver \cite{kesten2015theory} also present a fractional deferred-acceptance algorithm that computes a unique strongly ex-ante stable random matching. Their paper analyses the strategy proofness and efficiency of this mechanism. Our characterization goes in another direction, we study the relationship among the stable matchings that are involved in the lotteries.

Bansal \textit{et al.} \cite{bansal2007polynomial} and Cheng \textit{et al.} \cite{cheng2008unified} study the concept of cycles in preferences and cyclic matchings for  many-to-many and many-to-one matching markets, respectively. These papers are an extension of Irving and Leather \cite{irving1986complexity}.
To seek for cycles in preferences, these authors first reduce
the preference lists of all agents. We present the reduction procedure for our market in the Appendix.
This reduction procedure allows us to find cycles in preferences.  Since the cycles of a
reduced list are disjoint, we extend the definition of cyclic matching to a set of cycles in the reduced preference profile.  

Following the extension of cyclic matching used by Bansal \textit{et al.} \cite{bansal2007polynomial} and Cheng \textit{et al.} \cite{cheng2008unified}, we define a \textit{connected set} generated by a stable matching $\mu$ as the set of all cyclic matchings of $\mu$ (including $\mu$). Then, we characterize a strongly stable fractional matching as a lottery over stable matchings that belong to the same connected set (Theorem \ref{caraterizacion}). Moreover, by  Theorem \ref{sucesion decreciente}, we prove that the stable matchings that belong to the same connected set, also have the decreasing order in the eyes of all firms.  In this way, we characterize the set of all strongly stable fractional matchings as the union of the convex hulls of these connected sets (Corrolary \ref{corolario 2}).

Roth \textit{et al.} \cite{roth1993stable}, (in Corollary 21) proved a necessarily condition that states that in a strongly stable fractional matching, each agent is matched with at most two agents of the other side of the market. Schlegel \cite{schlegel2018note} generalizes this necessarily condition for the school choice set-up with strict priorities (similar setting than ours). He shows that a strongly stable fractional matching fulfils that each worker has a positive probability to be matched to at most two distinct firms, and for each firm, all but possibly one position are assigned deterministically. For the one position that is assigned by a lottery, two workers have a positive probability of been matched to the firm (here stated as Corrolary \ref{corolario 3}). Further, although he proves that a strongly stable fractional matching is ``almost" integral, he does not describes which agents are matched (there are several ``almost" integral stable fractional matchings that are not strongly stable, Example \ref{ejemplo b-b} presents an ``almost" integral stable fractional matching that is not strongly stable).
Recall that our characterization gives a necessary and sufficient condition for a stable fractional matching to be strongly stable. As a particular case, our characterization gives an alternative proof for these two results, for the school choice set-up due to Schlegel \cite{schlegel2018note} is straightforward, and for the marriage market due to Roth \textit{et al.} \cite{roth1993stable}, it's necessary only to set all quotas of all firms equal to one. Moreover, our characterization shows explicitly which are the matched agents in a strongly stable fractional matching, through the stable matching involved in the convex combination (\textit{Cf.} (\ref{para el corolario}) in proof of Theorem \ref{caraterizacion}).

This paper is organized as follows. Section 2, we formally introduce the market, preliminary results, and one of Ba\"{i}ou and Balinski's  characterizations of stable matchings. Section 3, we  define a strongly stable fractional matching and  prove that it can be represented by a convex combination over stable matchings that are ordered for all firms. We also discuss cycles and cyclic matching properties that we use in the characterization result. In Section 4, we present our characterization of a strongly stable fractional matching. The Appendix contains the reduction procedure, lemmas, and proofs of the lemmas needed for our characterization.

\section{Preliminary Results.}

The many-to-one matching market that we study, consists of two sets of agents, the set
of firms $F=\{f_{1},\ldots,f_{n}\}$ and the set of workers $W=\{w_{1},\ldots,w_{m}\}$. Each worker $w$ has an antisymmetric,  transitive, and complete preference relation $\succ_{w}$ over $F\cup\{w\}$, and each firm $f$ has an antisymmetric, transitive, and complete preference relation $\succ_{f}$ over the power set of workers, $2^{W}$. Also, each firm $f$ has a maximum number of positions to fill: its quota, denoted by $q_{f}$. Let $\boldsymbol{q}=(q_f)_{f \in F}$ be the vector of quotas. Given $W_{0},W_{1}\subseteq{W}$, we write $W_{0}\succeq_{f}W_{1}$ to indicate that the firm $f$ likes $W_{0}$ as much as $W_{1}$. Given the preference relation $\succ_{f}$, we say that $W_{0}\succ_{f}W_{1}$ when $W_{0}\succeq_{f}W_{1}$ and $W_{0}\neq{W_{1}}$. Analogously, for each worker $w$, and any two firms, $f_{0},f_{1}\in{F}$, we write $f_{0}\succeq_{w}f_{1}$ and $f_{0}\succ_{w}f_{1}$.

\textit{Preference profiles} are $(n+m)-tuples$ of preference relations and they are denoted by $P=(\succ_{f_{1}},\ldots,\succ_{f_{n}},\succ_{w_{1}},\ldots,\succ_{w_{m}})$. The matching market for the sets $W$ and $F$ with the preference profile $P$ and vector of quotas $\boldsymbol{q}$ is denoted by $(F,W,P,\boldsymbol{q})$.

We say that a pair $(f,w)\in{F\times{W}}$ is an \textit{acceptable pair} at $P$ if $w$ is acceptable for $f$, and $f$ is acceptable for $w$, that is, $\{w\}\succ_{f}\emptyset$ and $f\succ_{w}w$. Let us denote by $A(P)$ the set of all acceptable pairs of the matching market $(F,W,P,\boldsymbol{q})$, (simply $A$, when no confusion arises).

The \textit{assignment problem} consists of matching workers with firms keeping the bilateral nature of their relationship and allowing for the possibility that firms and workers remain unmatched. Formally,

\begin{definition}
Let $(F,W,P,\boldsymbol{q})$ be a many-to-one matching market. A \textbf{matching} $\mu$ is a mapping from the set $F \cup W$ into the set of all subsets of $F \cup W$ such that, for all $w\in{W}$ and $f\in{F}$:
\begin{enumerate}
\item $\vert\mu(w)\vert=1$ and if $\mu(w)\neq\{w\}$,  then $\mu(w)\subseteq{F}$.
\item $\mu(f)\in{2^{W}}$ and $\vert\mu(f)\vert\leq q_{f}$.
\item $\mu(w)=\{f\}$ if and only if $w\in\mu(f)$.
\end{enumerate}
\end{definition}

Usually we will omit the curly brackets, for instance, instead of condition 1. and 3., we will write: ``1. $\vert\mu(w)\vert=1$ and if $\mu(w)\neq w$,  then $\mu(w)\subseteq{F}$.'' and ``3. $\mu(w)=f$ if and only if $w\in\mu(f)$.''  
Assume that each firm $f$ gives its ranking of workers individually, and orders subsets of workers in a \textit{responsive} manner. That is to say, adding ``good" workers to a set leads to a better set, whereas adding ``bad" workers to a set leads to a worse set. In addition, for any two subsets that differ in only one worker, the firm prefers the subset containing the most preferred worker. Formally,
\begin{definition}
The preference relation $\succ_f$ over $2^{W}$ is \textbf{$\boldsymbol{q}$-responsive} if it satisfies the following conditions:
\begin{enumerate}
\item For all $T\subseteq{W}$ such that $\vert{T}\vert>q_{f}$, we have that $\emptyset\succ_{f}T$.
\item For all $T\subseteq{W}$ such that $\vert{T}\vert<q_{f}$ and $w\not\in{T}$, we have that $$T\cup\{w\}\succ_{f}T \mbox{ if and only if } w\succ_{f}\emptyset.$$
\item For all $T\subseteq{W}$ such that $\vert T  \vert<q_{f}$ and $w,w'\not\in{T}$, we have 
$$T \cup \{w\}\succ_{f} T\cup \{w'\} \mbox{ if and only if } w\succ_{f}w'.$$
\end{enumerate}
\end{definition} 
 Let $\mu \succ_{F}\mu'$ denote that all firms like $\mu$ at least as well as $\mu'$ 
with at least one firm strictly preferring $\mu$ to $\mu'$, that is, $\mu(f)\succeq_{f}\mu'(f)$ for all  $f\in F$
and $\mu(f')\succ_{f'}\mu'(f')$ for at least one firm $f'\in F$. 
We say that $\mu\succeq_{F}\mu'$ means that either $\mu\succ_{F}\mu'$ or $\mu=\mu'$. Analogously, define $\mu \succ_{W}\mu'$ and $\mu \succeq_{W}\mu'$. 

We say that a matching $\mu$ is \textit{individually rational} if $\mu(w)=f$ for some worker $w$ and firm $f$, then $(f,w)$ is an acceptable pair.
Similarly, a pair $(f,w)$ is a \textit{blocking pair} for matching $\mu$, if the worker $w$ is not employed by the firm $f$, but they both prefer to be matched to each other. That is, a matching $\mu$ is blocked by a \textit{firm-worker pair} $(f,w)$:
\begin{enumerate}
\item [1.] If $\vert\mu(f)\vert=q_{f},\mu(w)\not=f ,~ f\succ_{w}\mu(w)\mbox{ and }w\succ_{f}w'$ for some $w'\in{\mu(f)}$.
\item [2.] If $\vert\mu(f)\vert<q_{f},\mu(w)\not=f \mbox{ and } f\succ_{w}\mu(w)\mbox{ and } w\succ_{f}\emptyset.$
\end{enumerate}

In that way, a matching $\mu$ is \textit{stable} if it is individually rational and has no blocking pairs. We denote by $S(P)$ to the set of all stable matchings at the preference profile $P$.

An importat result of matching theory is the \textit{Rural Hospital Theorem} (RHT). When firms have $q$-responsive preference and workers strict preference, the Rural Hospital Theorem states the following: (see Roth \cite{roth1984stability},\cite{roth1986allocation} for more details)
\medskip

\noindent \textbf{Theorem (RHT)}\textit{ The set of matched agents is the same under every stable matching. Moreover, each firm that does not fill its quota has the same set of agents matched under every stable matching.}


\subsection{Linear Programming Approach.}

For the marriage market, Rothblum \cite{rothblum1992characterization} characterizes stable matchings as extreme points of a convex polytope generated by a system of linear inequalities. Ba\"{i}ou and Balinski \cite{baiou2000stable} present two generalizations of the convex polytope for the many-to-one matching market $(F,W,P,\boldsymbol{q})$ with $\boldsymbol{q}$-responsive preferences. We focus on the most general one.

Given a matching $\mu,$ a vector $x^{\mu}\in\left\{  0,1\right\}
^{\left\vert F\right\vert \times\left\vert W\right\vert }$ is an
\textbf{incidence vector} when $x_{f,w}^{\mu}=1$ if and only if
$\mu\left(  w\right)  =f$ and $x_{f,w}^{\mu}=0$ otherwise. When no
confusion arises, we identify each matching with its incidence
vector.

 Let \textit{CP} be the convex polytope generated by the following linear inequalities:

\begin{eqnarray}
\sum_{j\in{W}}x_{f,j}\leq{q_{f}}~~~~~~~~~~~~~~~~~~~~~~~~~~~~~~~~~~~~ & f\in{F}\label{ecuacion 1} ~~~~~~~~ ~~~~~~~\\
\sum_{i\in{F}}x_{i,w}\leq{1} ~~~~~~~~~~~~~~~~~~~~~~~~~~~~~~~~~~~~~~& w\in{W}\label{ecuacion 2} ~~~~~~~~~~~~~~\\
x_{f,w}\geq{0}~~~~~~~~~~~~~~~~~~~~~~~~~~~~~~~~~~~~~~~~~~~& (f,w)\in{F\times{W}}\label{ecuacion 4}~~~~~\\
x_{f,w}={0}~~~~~~~~~~~~~~~~~~~~~~~~~~~~~~~~~~~~~~~~~~~& (f,w)\in{F\times{W}}\setminus{A}\label{ecuacion 5}
\end{eqnarray}

Notice that an integer solution of \textit{CP} represents the incidence vector of a individually rational matching for the many-to-one matching market.  The extreme points of this convex polytope  are all integer points. This convex polytope is known as the polytope of the transportation problem. For more detail, see Luenberger and Ye \cite{luenberger2008linear}. A  non-integer solution of \textit{CP} is called a \textbf{fractional matching}.

Define a new convex polytope \textit{SCP}, by adding to the convex polytope \textit{CP} the following inequality:
\begin{eqnarray}
\sum_{j\succ_{f}w}x_{f,j}+q_{f}\sum_{i\succ_{w}f}x_{i,w}+q_{f}x_{f,w}\geq q_{f} ~~~~~& (f,w)\in{A}\label{ecuacion 3}
\end{eqnarray}

\begin{lemma}[Ba\"{i}ou and Balinski \cite{baiou2000stable}] \label{Ba-Ba stable PC}
Let $(F,W,P,\boldsymbol{q})$ be a many-to-one matching market. $\mu$ is a stable matching for $(F,W,P,\boldsymbol{q})$ if and only if its incidence vector is an integer solution of \textit{SCP}.
\end{lemma}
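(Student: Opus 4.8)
The plan is to prove both directions at once by reading, term by term, what the extra inequality (\ref{ecuacion 3}) asserts when $x=x^{\mu}$ is a $0/1$ incidence vector. Since the text already records that the integer points of \textit{CP} are exactly the incidence vectors of individually rational matchings, the entire content of the lemma reduces to the following claim: for an individually rational matching $\mu$, the family of inequalities (\ref{ecuacion 3}) indexed by $(f,w)\in A$ holds if and only if $\mu$ has no blocking pair.

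First I would fix an acceptable pair $(f,w)$ and evaluate the three summands of (\ref{ecuacion 3}) at $x^{\mu}$. Because $x^{\mu}$ is integral, $\sum_{j\succ_{f}w}x_{f,j}^{\mu}$ counts the workers in $\mu(f)$ that $f$ strictly prefers to $w$; the term $q_{f}\sum_{i\succ_{w}f}x_{i,w}^{\mu}$ equals $q_{f}$ precisely when $\mu(w)\succ_{w}f$ and is $0$ otherwise (this uses (\ref{ecuacion 2}), so that $w$ is matched to at most one firm and the inner sum is $0$ or $1$); and $q_{f}x_{f,w}^{\mu}$ equals $q_{f}$ precisely when $\mu(w)=f$. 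Writing $k=\vert\{j\in\mu(f):j\succ_{f}w\}\vert$ and letting $[\,\cdot\,]$ denote the $0/1$ indicator, the inequality becomes $k+q_{f}[\mu(w)\succ_{w}f]+q_{f}[\mu(w)=f]\geq q_{f}$.

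The heart of the argument is then a short case analysis on how $w$ ranks $f$ against $\mu(w)$. If $\mu(w)=f$ or $\mu(w)\succ_{w}f$, one of the last two terms already equals $q_{f}$, so the inequality holds automatically; and in both of these cases $(f,w)$ cannot block $\mu$ (blocking requires $\mu(w)\neq f$ and $f\succ_{w}\mu(w)$). The only remaining case, by completeness and antisymmetry of $\succ_{w}$, is $\mu(w)\neq f$ and $f\succ_{w}\mu(w)$, where (\ref{ecuacion 3}) collapses to $k\geq q_{f}$. Here I would split on the quota: if $\vert\mu(f)\vert<q_{f}$ then $k<q_{f}$, the inequality fails, and $(f,w)$ is a blocking pair of type 2 (acceptability supplies $\{w\}\succ_{f}\emptyset$); if $\vert\mu(f)\vert=q_{f}$ then $k\geq q_{f}$ forces $k=q_{f}$, i.e. every worker assigned to $f$ is strictly preferred to $w$, which is exactly the failure of the type-1 blocking condition, while $k<q_{f}$ produces some $w'\in\mu(f)$ with $w\succ_{f}w'$ and hence a type-1 blocking pair. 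Matching these outcomes against the two displayed blocking definitions shows that (\ref{ecuacion 3}) at $(f,w)$ is equivalent to ``$(f,w)$ does not block $\mu$.''

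Finally I would assemble the pieces: $x^{\mu}$ is an integer solution of \textit{SCP} iff it is an integer solution of \textit{CP} satisfying every inequality (\ref{ecuacion 3}), iff $\mu$ is individually rational and has no blocking pair of either type, iff $\mu$ is stable. The work is essentially bookkeeping rather than a deep structural step, so the main obstacle is just care with the edge cases: checking that the indicator reading of the middle term really rests on (\ref{ecuacion 2}), and that the unmatched situation $\mu(w)=w$ is correctly absorbed into ``$f\succ_{w}\mu(w)$'' via acceptability. Once these are pinned down, the lemma is a faithful dictionary between the linear inequality (\ref{ecuacion 3}) and the combinatorial no-blocking conditions.
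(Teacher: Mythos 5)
The paper does not prove this lemma at all: it is imported verbatim from Ba\"{i}ou and Balinski \cite{baiou2000stable} and used as a black box, so there is no in-paper argument to compare against. Your direct verification is correct and is the standard way to establish the result. The dictionary you set up is accurate: for an integer point of \textit{CP} (an individually rational matching, as the paper notes), the middle term of (\ref{ecuacion 3}) is $q_f$ times a $0/1$ indicator precisely because of (\ref{ecuacion 2}) plus integrality, the last term detects $\mu(w)=f$, and the residual case $f\succ_w\mu(w)$, $\mu(w)\neq f$ reduces (\ref{ecuacion 3}) to $k\geq q_f$ with $k=\vert\{j\in\mu(f):j\succ_f w\}\vert$; your split on $\vert\mu(f)\vert<q_f$ versus $\vert\mu(f)\vert=q_f$ matches the two displayed blocking conditions exactly (strictness of $\succ_f$ is what turns $k=q_f$ into ``no $w'\in\mu(f)$ with $w\succ_f w'$''), and the unmatched case $\mu(w)=w$ is handled correctly via acceptability. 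The only loose end is that you establish ``no blocking pair'' only for $(f,w)\in A$, since (\ref{ecuacion 3}) is indexed by acceptable pairs; to conclude full stability you should add the one-line observation that a non-acceptable pair cannot block an individually rational matching (type 2 requires $w\succ_f\emptyset$ outright, and type 1 yields $w\succ_f w'\succ_f\emptyset$ and $f\succ_w\mu(w)\succeq_w w$ by transitivity). With that remark inserted, the proof is complete.
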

 
 We define a \textbf{stable fractional matching} as a not necessarily integer solution of the convex polytope \textit{SCP}. 
For the marriage market, i.e. $q_{f}=1$ for all $f\in F$, Rothblum \cite{rothblum1992characterization} proves that the extreme points of the associated convex polytope, are the stable matchings. It is naturally expected that this result carries over to the more general case, a many-to-one matching market. But this is not true for the convex polytope \textit{SCP}. Here, we present an example taken from Ba\"{i}ou and Balinski \cite{baiou2000stable} that shows a many-to-one market, where the convex polytope has fractional extreme points. This also shows that a lottery over stable matchings is also a stable fractional matching. However, the opposite case does not always hold. 
\begin{example}\label{ejemplo b-b}
Let $(F,W,P,\boldsymbol{q})$ be a many-to-one matching market.
Let $F=\{f_{1},f_{2}\}$, $W=\{w_{1},w_{2},w_{3},w_{4}\}$, $P$ is the following preference profile:
$$
\begin{array}{cc}
\succ_{f_{1}}=w_{1},w_{2},w_{3},w_{4} & ~~~~\succ_{w_{1}}=f_{2},f_{1}\\ 
\succ_{f_{2}}=w_{4},w_{3},w_{2},w_{1} & ~~~~\succ_{w_{2}}=f_{2},f_{1}\\ 
& ~~~~\succ_{w_{3}}=f_{2},f_{1}\\ 
& ~~~~\succ_{w_{3}}=f_{1},f_{2}, 
\end{array}
$$
and $q_{1}=q_{2}=2$. The only two stable matchings for this market are:

\[
\begin{array}{cccc}
x^{\mu_{F}}= &\left[\begin{array}{cccc}1&1&0&0\\0&0&1&1\end{array}\right];~~~~~& x^{\mu_{W}}= &\left[\begin{array}{cccc}1&0&0&1\\0&1&1&0\end{array}\right].
\end{array}
\]
Ba\"{i}ou and Balinski observe that the stable fractional matching 
\[
\begin{array}{cc}
x^{1}= &\left[\begin{array}{cccc}1&\frac{1}{2}&\frac{1}{2}&0\\0&\frac{1}{2}&\frac{1}{2}&1\end{array}\right],
\end{array}
\]
is a vertex of the convex polytope \textit{SCP}.
\end{example}\bigskip

After observing that the convex polytope has fractional extreme points, Ba\"{i}ou and Balinski \cite{baiou2000stable} present a second generalization for the many-to-one matching market. In this second generalization, the extreme points of the convex polytope, are exactly the stable matchings for the many-to-one market. This assures that this last convex polytope,  is a subset of the convex polytope \textit{SCP}. For that reason, our study is based on the convex polytope \textit{SCP}.

\section{The Strongly Stable Fractional Matchings.}

Each entry of the vector that represents a stable fractional matching, $x_{f,w}$, can be interpreted as the time that firm $f$ and worker $w$  spend with each other. For a stable fractional matching $x$,  it can happen that two agents, one from each side of the market, have an incentive to increase the time that they spend together at the expense of those
they like less at a stable fractional matching $x$. The importance of a strongly stable fractional matching is to avoid this and prevent that agents have incentives to block the stable fractional matching in a fractional way. We formally present the definition of a strongly stable fractional matching for our market.
\bigskip
\begin{definition}
Let $(F,W,P,\boldsymbol{q})$ be a many-to-one matching market. A fractional matching $\bar{x}$ is \textbf{strongly stable}  if, for each $(f,w)\in{A}$, $\bar{x}$ satisfies the \textit{strong stability condition}  
\begin{equation}\label{condition SS(P)}
\left[  q_{f}-\sum_{j\succeq_{f}w}\bar{x}_{f,j}\right]  \cdot\left[  1-\sum_{i\succeq_{w}f}\bar{x}_{i,w}\right]  =0.   
\end{equation}
The matching $\bar{x}$ satisfying the strong stability condition is known as a \textbf{strongly stable fractional matching}.
\end{definition} 
\bigskip
We denote by $SSF(P)$ to the set of all strongly stable fractional matchings at the preference profile $P$. Assume that for a pair $(f,w) \in A$, the fractional matching $\bar{x}$ does not fulfil condition (\ref{condition SS(P)}). Then, we have that  $ q_{f}-\sum_{j\succeq_{f}w}\bar{x}_{f,j}>0$  and $ 1-\sum_{i\succeq_{w}f}\bar{x}_{i,w}>0$. Meaning that there are at least two agents $f'$ and $w'$ such that, $f\succ_{w}f'$, $w\succ_{f}w'$, $\bar{x}_{f,w'}>0$, $\bar{x}_{f',w}>0$ and  $\bar{x}_{f,w}<1$. Hence, both $f$ and $w$ will  have an incentive to increase the time that they spend together at the expense of $w'$ and $f'$ respectively. This means that $\bar{x}$ is blocked in a fractional way by the pair $(f,w)$.
\medskip

\noindent \textbf{Example 1 (Continued)} \textit{Recall the stable fractional matching $$
\begin{array}{cc}
x^{1}= &\left[\begin{array}{cccc}1&\frac{1}{2}&\frac{1}{2}&0\\0&\frac{1}{2}&\frac{1}{2}&1\end{array}\right].
\end{array}
$$
Also, we have that $f_2\succ_{w_3}f_1$  and $w_4 \succ_{f_2}w_3 \succ_{f_2}w_2$. Now, we compute condition (\ref{condition SS(P)}) for the pair $(f_2,w_3)$.
$$\left[q_{f_2}-\sum_{j\succeq_{f_2}w_3}\bar{x}_{f_2,j}\right]\cdot\left[1-\sum_{i\succeq_{w_3}f_2}\bar{x}_{i,w_3}\right]$$
$$=\left[2-\frac{3}{2}\right]\cdot\left[1-\frac{1}{2}\right]\neq 0.$$
Hence, $x^{1}$ does not fulfils condition (\ref{condition SS(P)}) for the pair $(f_2,w_3)$. Moreover, since $x^{1}_{f_1,w_3}=\frac{1}{2}>0$, $x^{1}_{f_2,w_2}=\frac{1}{2}>0$ and  $x^{1}_{f_2,w_3}=\frac{1}{2}<1$, then agents $f_2$ and $w_3$ have incentive to increase the time that they spend together at expense of $w_2$ and $f_1$ respectively. Hence, $x^{1}$ is blocked in a fractional way by the pair $(f_2,w_3)$. Therefore, $x^{1}$ is not strongly stable.}

\begin{remark}\label{matching estable cumple condicion}
The incidence vector of a stable matching, also fulfils condition (\ref{condition SS(P)}).
\end{remark}
 
For the particular case where all quotas are equal to one, (the marriage market), and for a stable fractional matching $x$, Rothblum \cite{rothblum1992characterization} defines a stable matching that assigns to each firm $f$ the most preferred worker among those that $x_{f,w}>0$, for all $w\in{W}$. Here we generalize this definition for the many-to-one matching market $(F,W,P,\boldsymbol{q})$. We denote $supp(x)$ to the support of the fractional matching $x$, that is, $supp(x)=\{(f,w):x_{f,w}>0\}$.

For a many-to-one matching market $(F,W,P,\boldsymbol{q})$, and for a given stable fractional matching $x$, we define the set of workers employed in the best $q_{f}$ positions of $f$. 
Let $C^{0}_{f}(x)=\{w:(f,w)\in{supp(x)}\}$, 
and define $C^{k}_{f}(x)=\{w\in{C_{f}^{0}(x)}:\mbox{there is no } ~w'\in{C^{0}_{f}(x)\setminus{C^{k-1}_{f}(x)},~w'\succ_{f}w}\}$. In words, $C^{k}_{f}(x)$ is the set of the $k$-best workers in the $supp(x)$ for the firm $f$. 

Now, we define the matching where each firm is matched to the best $q_{f}$ workers in the $supp(x)$. Formally,
\bigskip
\begin{definition}\label{defino mux}
Let $(F,W,P,\boldsymbol{q})$ be a many-to-one matching market. Let $x$ be a stable fractional matching. For each firm $f$, we define $\mu_{x}$ as:
$$\mu_{x}(f)=C^{q_{f}}_{f}(x).$$
\end{definition}\bigskip
\begin{remark}\label{observ mux}
If for some firm $f$ we have that $\vert C^{0}_{f}(x)\vert\leq q_{f}$, then $x^{\mu_{x}}_{f,w}=1$ for all $w\in{C^{0}_{f}(x)}$.
\end{remark}\bigskip
The following lemma generalizes  Lemma 12 of Roth \textit{et al.} \cite{roth1993stable}, and states that $\mu_{\bar{x}}$ is a stable matching whenever $\bar{x}$ is a strongly stable fractional matching.
\begin{lemma}\label{mux stable}
Let $(F,W,P,\boldsymbol{q})$ be a many-to-one matching market. Let $\bar{x}$ be a strongly stable fractional matching. Then, $\mu_{\bar{x}}$ is a stable matching.
\end{lemma}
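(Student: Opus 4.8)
The plan is to show that $\mu_{\bar x}$ is first a well-defined matching and then that it admits no blocking pair. Individual rationality and the quota bound are immediate: constraint \eqref{ecuacion 5} forces $supp(\bar x)\subseteq A$, so every pair $(f,w)$ with $w\in\mu_{\bar x}(f)=C^{q_f}_f(\bar x)$ is acceptable, and $|\mu_{\bar x}(f)|=|C^{q_f}_f(\bar x)|\le q_f$ by Definition \ref{defino mux}. The two genuine tasks are (a) that no worker lies in two different sets $\mu_{\bar x}(f)$, so that $\mu_{\bar x}$ really is a matching, and (b) that $\mu_{\bar x}$ has no blocking pair. Throughout I abbreviate the two factors of \eqref{condition SS(P)} at $(f,w)$ as the \emph{firm factor} $q_f-\sum_{j\succeq_f w}\bar x_{f,j}$ and the \emph{worker factor} $1-\sum_{i\succeq_w f}\bar x_{i,w}$, both of which are nonnegative by \eqref{ecuacion 1} and \eqref{ecuacion 2}.

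For (a), suppose $w\in\mu_{\bar x}(f)\cap\mu_{\bar x}(f')$ with $f\ne f'$ and, say, $f\succ_w f'$; then $\bar x_{f,w}>0$ and $\bar x_{f',w}>0$. I would first rule out that the firm factor vanishes at $(f',w)$: since $w\in C^{q_{f'}}_{f'}(\bar x)$, at most $q_{f'}$ workers $\succeq_{f'}w$ lie in $supp(\bar x)$, so $\sum_{j\succeq_{f'}w}\bar x_{f',j}=q_{f'}$ would force each such weight to equal $1$, in particular $\bar x_{f',w}=1$, contradicting $\bar x_{f,w}>0$ via \eqref{ecuacion 2}. Hence \eqref{condition SS(P)} at $(f',w)$ makes the worker factor vanish, i.e. $\sum_{i\succeq_w f'}\bar x_{i,w}=1$. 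Because $f\succ_w f'$, the index set $\{i\succeq_w f\}$ omits $f'$, so $\sum_{i\succeq_w f}\bar x_{i,w}\le 1-\bar x_{f',w}<1$; thus the worker factor at $(f,w)$ is strictly positive, and \eqref{condition SS(P)} forces the firm factor there to vanish. But the same counting as above (now with $w\in C^{q_f}_f(\bar x)$) turns $\sum_{j\succeq_f w}\bar x_{f,j}=q_f$ into $\bar x_{f,w}=1$, again contradicting $\bar x_{f',w}>0$. Therefore no worker is shared, and $\mu_{\bar x}$ is a matching.

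For (b), suppose $(f,w)$ blocks $\mu_{\bar x}$; then $w\notin\mu_{\bar x}(f)$, $f\succ_w\mu_{\bar x}(w)$, and either $|\mu_{\bar x}(f)|<q_f$ or some $w'\in\mu_{\bar x}(f)$ has $w\succ_f w'$. I would first show the firm factor at $(f,w)$ is strictly positive. If $|\mu_{\bar x}(f)|<q_f$ then $|C^0_f(\bar x)|<q_f$, so $\sum_{j\succeq_f w}\bar x_{f,j}\le\sum_j\bar x_{f,j}<q_f$. If instead $w\succ_f w'$ for some $w'\in\mu_{\bar x}(f)$, then $w\notin\mu_{\bar x}(f)$ forces $w\notin supp(\bar x)$ (otherwise $w$ would outrank $w'$ and belong to the top $q_f$), and every worker $\succeq_f w$ in the support is $\succ_f w'$; as $w'$ is among the top $q_f$, there are at most $q_f-1$ such workers, whence $\sum_{j\succeq_f w}\bar x_{f,j}\le q_f-1<q_f$. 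In both cases \eqref{condition SS(P)} then forces the worker factor to vanish: $\sum_{i\succeq_w f}\bar x_{i,w}=1$. If $w$ is matched in $\mu_{\bar x}$, say $w\in\mu_{\bar x}(g)$ with $g=\mu_{\bar x}(w)$, then $\bar x_{g,w}>0$ and $f\succ_w g$ give $\sum_{i\succeq_w f}\bar x_{i,w}\le 1-\bar x_{g,w}<1$, a contradiction.

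The one remaining possibility---and the crux of the argument---is that the blocked worker $w$ is \emph{unmatched} in $\mu_{\bar x}$ while nevertheless $\sum_{i}\bar x_{i,w}=1$. I expect this to be the main obstacle, and the plan is to show that such a worker cannot exist, i.e. that full fractional mass implies being matched by $\mu_{\bar x}$. Every support firm $g$ of such a $w$ is saturated and has its whole set $\mu_{\bar x}(g)=C^{q_g}_g(\bar x)$ strictly preferred to $w$. Applying \eqref{condition SS(P)} to the pair $(g,v)$, where $v$ is the least preferred worker of $g$ in $\mu_{\bar x}(g)$ (its rank-$q_g$ worker), the firm factor cannot vanish: $\sum_{j\succeq_g v}\bar x_{g,j}=q_g$ would place mass exactly $q_g$ on the top $q_g$ workers, leaving no room for the positive mass $\bar x_{g,w}$ sitting below $v$ and violating \eqref{ecuacion 1}. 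Hence the worker factor vanishes and $v$ itself has full mass on firms $\succeq_v g$, while being matched only to $g$. Iterating this observation produces an alternating chain of full-mass workers and saturated firms; since the market is finite this chain must repeat, yielding a fractional preference cycle. The final step is to argue that strong stability precludes such a cycle (the acyclicity encapsulated by the reduction procedure of the Appendix), which contradicts $\bar x_{f,w}>0$ being genuinely fractional. This rules out the unmatched full-mass worker and completes the proof that $\mu_{\bar x}$ is stable.
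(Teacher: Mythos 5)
Your part (a) and the matched-worker half of part (b) are correct and in substance the same as the paper's proof: the paper organizes the well-definedness argument by cases on $\vert C^{0}_{f}(\bar x)\vert$ versus $q_f$, but the engine is identical (the bound $\sum_{j\succeq_f w}\bar x_{f,j}\le q_f$ holds with equality only if $\bar x_{f,w}=1$, so condition \eqref{condition SS(P)} plus \eqref{ecuacion 2} forbids a doubly supported matched worker), and the blocking-pair argument via a positive firm factor forcing $\sum_{i\succeq_w f}\bar x_{i,w}=1$ is the paper's as well. The problem is the final case you isolate: a blocking worker $\bar w$ that is unmatched under $\mu_{\bar x}$ yet carries full fractional mass on firms above $\bar f$. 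You are right that this is where the difficulty lives --- the paper's own proof quietly writes $\bar x_{\mu_{\bar x}(\bar w),\bar w}>0$, i.e.\ it tacitly assumes $\bar w$ is matched under $\mu_{\bar x}$, so you have located a genuine lacuna rather than overlooked an argument the paper supplies --- but your proposal does not close it.

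The proposed resolution cannot work as stated. You iterate to an ``alternating chain of full-mass workers and saturated firms,'' obtain a cycle, and then assert that ``strong stability precludes such a cycle,'' invoking the acyclicity of the reduction procedure. That principle is false: cycles (rotations) among saturated firms and full-mass workers are exactly what non-degenerate strongly stable fractional matchings are built from --- Theorem \ref{caraterizacion} characterizes them as convex combinations over connected sets generated by such cycles, and already the $2\times 2$ marriage instance with $\bar x=\tfrac12 x^{\mu_F}+\tfrac12 x^{\mu_W}$ realizes a full cycle of saturated firms and full-mass workers while being strongly stable. What actually rules out the unmatched worker $\bar w$ is not the cycle but the extra mass $\bar x_{g,\bar w}>0$ it deposits \emph{below} each supporting firm $g$'s matched set: this forces $\sum_{v\in\mu_{\bar x}(g)}\bar x_{g,v}<q_g$, hence a strictly positive firm factor at the pair $(g,u)$ with $u$ the best supported worker outside $\mu_{\bar x}(g)$, and one must then track $u$'s worker factor along the chain until some pair violates \eqref{condition SS(P)}. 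That accounting --- showing the injected mass cannot be absorbed around the cycle --- is the missing content, and without it the proof is incomplete at exactly this step.
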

\begin{proof}
See the Appendix.\end{proof}

The following lemma is a technical result used further in Theorem \ref{sucesion decreciente}. This lemma states that a strongly stable fractional matching $\bar{x}$ is always represented as a convex combination between the stable matching $\mu_{\bar{x}}$ and another strongly stable fractional matching. 
\begin{lemma}\label{para sus decre}
Let $(F,W,P,\boldsymbol{q})$ be a many-to-one matching market. Let $\bar{x}$ be a strongly stable fractional matching and $\bar{x}\not=x^{\mu_{\bar{x}}}$. Let $\alpha=min\{\bar{x}_{f,w}:x^{\mu_{\bar{x}}}_{f,w}=1\}$. Then, $y$ defined as:
$$
y=\frac{\bar{x}-\alpha x^{\mu_{\bar{x}}}}{1-\alpha} 
$$
is a strongly stable fractional matching, such that $supp(y)\subset supp(\bar{x})$.\footnote{Notice that, here we use ``$\subset$" to denote the strict inclusion; that is, $A\subset B$ means that $A$ is a proper subset of $B$.}

\end{lemma}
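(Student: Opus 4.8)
The plan is to check that $y$ satisfies the three requirements in turn: it lies in the transportation polytope \textit{CP}, its support is strictly contained in that of $\bar{x}$, and it obeys the strong stability condition (\ref{condition SS(P)}). Throughout I abbreviate $z=x^{\mu_{\bar{x}}}$, so that the definition of $y$ rearranges into the convex decomposition $\bar{x}=\alpha z+(1-\alpha)y$. First I would record that $\alpha\in(0,1)$: every pair with $z_{f,w}=1$ has $w\in\mu_{\bar{x}}(f)\subseteq C^{0}_{f}(\bar{x})$ and hence $\bar{x}_{f,w}>0$, so $\alpha>0$; and $\alpha=1$ would make $\bar{x}_{f,w}=1$ on all such pairs, which together with (\ref{ecuacion 1}) forces each firm's support to be exactly $\mu_{\bar{x}}(f)$, i.e.\ $\bar{x}=z$, against the hypothesis. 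The support claim is then immediate: $z_{f,w}=1$ implies $\bar{x}_{f,w}>0$, so $\bar{x}_{f,w}=0$ forces $z_{f,w}=0$ and $y_{f,w}=0$, giving $supp(y)\subseteq supp(\bar{x})$; and for a pair $(f^{*},w^{*})$ attaining the minimum defining $\alpha$ one gets $y_{f^{*},w^{*}}=(\alpha-\alpha)/(1-\alpha)=0$ while $\bar{x}_{f^{*},w^{*}}=\alpha>0$, so the inclusion is strict.

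Next I would verify feasibility in \textit{CP}. Nonnegativity (\ref{ecuacion 4}) is clear where $z_{f,w}=0$ and follows where $z_{f,w}=1$ from $\bar{x}_{f,w}\geq\alpha$, the defining property of $\alpha$; constraint (\ref{ecuacion 5}) is inherited from $\bar{x}$ and $z$. For the firm inequality (\ref{ecuacion 1}) I would split on $|C^{0}_{f}(\bar{x})|$: if $|C^{0}_{f}(\bar{x})|>q_{f}$ then $\sum_{j}z_{f,j}=q_{f}$ and $\sum_{j}y_{f,j}\leq(q_{f}-\alpha q_{f})/(1-\alpha)=q_{f}$; if $|C^{0}_{f}(\bar{x})|\leq q_{f}$ then by Remark \ref{observ mux} we have $\sum_{j}z_{f,j}=|C^{0}_{f}(\bar{x})|\geq\sum_{j}\bar{x}_{f,j}$ (each entry is $\leq 1$), and again $\sum_{j}y_{f,j}\leq\sum_{j}z_{f,j}\leq q_{f}$. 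The genuinely delicate constraint is the worker inequality (\ref{ecuacion 2}): from $\sum_{i}y_{i,w}=(\sum_{i}\bar{x}_{i,w}-\alpha\sum_{i}z_{i,w})/(1-\alpha)$ the bound $\sum_{i}y_{i,w}\leq 1$ is immediate when $w$ is matched under $\mu_{\bar{x}}$ (then $\sum_{i}z_{i,w}=1$ and $\sum_{i}\bar{x}_{i,w}\leq 1$), but when $w$ is unmatched under $\mu_{\bar{x}}$ it becomes the nontrivial assertion $\sum_{i}\bar{x}_{i,w}\leq 1-\alpha$.

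I expect this to be the main obstacle, and it is exactly the point where strong stability of $\bar{x}$ is indispensable. I would prove the sharper statement that no worker lies in $supp(\bar{x})$ while being unmatched under $\mu_{\bar{x}}$, which makes (\ref{ecuacion 2}) automatic. The local mechanism is the following: if $\bar{x}_{f,w}>0$ with $w\notin\mu_{\bar{x}}(f)=C^{q_{f}}_{f}(\bar{x})$, then letting $w'$ be the least preferred worker of $\mu_{\bar{x}}(f)$ one computes $q_{f}-\sum_{j\succeq_{f}w'}\bar{x}_{f,j}\geq\bar{x}_{f,w}>0$, so the strong stability condition (\ref{condition SS(P)}) at $(f,w')$ forces $\sum_{i\succeq_{w'}f}\bar{x}_{i,w'}=1$; thus $w'$ is fully assigned, and if it is not assigned to $f$ alone it is again ``overflow'' at some firm it strictly prefers to $f$, where the same forcing applies. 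The hard part is turning this local forcing into a global contradiction: one iterates it to build a chain of fully assigned workers sitting above strictly improving firms, and invokes finiteness of the market to rule out both an unbounded chain and a terminal fully committed pair, which would squeeze the mass that a downstream overflow worker requires. Making this termination argument watertight is where the real work lies.

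Granting that every worker with positive column sum is matched under $\mu_{\bar{x}}$, feasibility of $y$ is complete, and strong stability of $y$ then follows cleanly. Since the two bracketed factors of (\ref{condition SS(P)}) are affine in $x$, the decomposition $\bar{x}=\alpha z+(1-\alpha)y$ yields, for each $(f,w)\in A$, the identities $q_{f}-\sum_{j\succeq_{f}w}\bar{x}_{f,j}=\alpha(q_{f}-\sum_{j\succeq_{f}w}z_{f,j})+(1-\alpha)(q_{f}-\sum_{j\succeq_{f}w}y_{f,j})$ and the analogous one for the worker factor. All six factors are nonnegative, those for $z$ and $y$ because $z$ is the incidence vector of the stable matching $\mu_{\bar{x}}$ (Lemma \ref{mux stable}) and $y\in\textit{CP}$ as just shown. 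As $\bar{x}$ is strongly stable, one of its two factors vanishes; if the firm factor is $0$ then a convex combination of two nonnegative numbers is $0$, forcing the firm factor of $y$ to vanish as well, so the product in (\ref{condition SS(P)}) for $y$ is $0$. The case where the worker factor of $\bar{x}$ vanishes is symmetric, and this establishes $y\in SSF(P)$.
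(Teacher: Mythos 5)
Your overall architecture coincides with the paper's: check the constraints of \textit{CP} one by one, get the strict support inclusion from a pair attaining the minimum defining $\alpha$, and obtain condition (\ref{condition SS(P)}) for $y$ by writing each bracketed factor of $\bar{x}$ as a convex combination of the corresponding nonnegative factors of $x^{\mu_{\bar{x}}}$ and $y$. On several of these points you are in fact more careful than the paper (e.g.\ your argument that $\alpha<1$, which the paper asserts ``from the definition of $\alpha$''). The problem is the worker constraint (\ref{ecuacion 2}). You correctly isolate it as the one nontrivial feasibility check, reduce it to the claim that no worker can carry positive column mass in $\bar{x}$ while being unmatched under $\mu_{\bar{x}}$, and then leave that claim unproved: you describe a local forcing step and concede that turning it into a global contradiction ``is where the real work lies.'' That is a genuine gap, not a formality --- if such a worker $w$ existed with $\sum_i\bar{x}_{i,w}>1-\alpha$, then $\sum_i y_{i,w}=\sum_i\bar{x}_{i,w}/(1-\alpha)>1$ and $y$ would simply fail to be a fractional matching, so the lemma hinges on exactly the statement you did not establish. (The paper is also terse here: it dismisses (\ref{ecuacion 2}) with ``a similar argument'' to the firm constraint, and the direct analogue of that argument does not cover the unmatched case.)

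The missing claim is true, and it can be closed by a short counting argument that avoids your chain construction entirely. First, every worker $w\in\mu_{\bar{x}}(f)$ has $\sum_i\bar{x}_{i,w}=1$: either $\sum_{j\succeq_f w}\bar{x}_{f,j}<q_f$, in which case (\ref{condition SS(P)}) at $(f,w)$ gives the full column sum directly, or $\sum_{j\succeq_f w}\bar{x}_{f,j}=q_f$, in which case, since at most $q_f$ supported workers are weakly preferred to $w$ (because $w\in C^{q_f}_f(\bar{x})$), all of those at most $q_f$ entries must equal $1$, so in particular $\bar{x}_{f,w}=1$. Second, each row sum satisfies $\sum_j\bar{x}_{f,j}\le\min\left(|C^0_f(\bar{x})|,q_f\right)=|\mu_{\bar{x}}(f)|$. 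Summing over $f$ and using that $\mu_{\bar{x}}$ is a matching (Lemma \ref{mux stable}), the total mass of $\bar{x}$ is at most the number of workers matched under $\mu_{\bar{x}}$, while those matched workers alone already contribute exactly that much column mass; hence every unmatched worker has column sum zero, which is the sharper statement you wanted. With that inserted, your proof is complete and otherwise follows the paper's route.
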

\begin{proof}See the Appendix.\end{proof}

The following theorem states that a strongly stable fractional matching can be represented by a particular convex combination of stable matchings. These stable matchings are all comparable in the eyes of all firms.
\begin{theorem}\label{sucesion decreciente}
Let $(F,W,P,\boldsymbol{q})$ be a many-to-one matching market. Let $\bar{x}$ be a strongly stable fractional matching. Then, there are stable matchings $\mu^1,\ldots,\mu^k$, and real numbers $\alpha_1,\ldots,\alpha_k$ such that 
\begin{equation}
\bar{x}=\sum_{l=1}^{k}\alpha _{l}x^{\mu ^{l}},~~ 0<\alpha _{l}\leq
1,~~ \sum_{l=1}^{k}\alpha _{l}=1,\text{ and }\mu ^{1}\succ_{F}\mu ^{2}\succ_{F}\ldots\succ_{F}\mu ^{k}.  \label{Sum-k}
\end{equation}

\end{theorem}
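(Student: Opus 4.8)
The plan is to induct on the cardinality of the support $supp(\bar{x})$, peeling off the stable matching $\mu_{\bar{x}}$ at each stage by invoking Lemmas \ref{mux stable} and \ref{para sus decre}. To make the ordering propagate through the induction, I would prove the slightly stronger statement that the decomposition can always be chosen with $\mu^1=\mu_{\bar{x}}$; carrying this extra bookkeeping is what lets me compare consecutive matchings.

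For the base case, if $\bar{x}=x^{\mu_{\bar{x}}}$ then by Lemma \ref{mux stable} the vector $\bar{x}$ is itself the incidence vector of a stable matching, and I take $k=1$, $\mu^1=\mu_{\bar{x}}$, $\alpha_1=1$. For the inductive step, suppose $\bar{x}\neq x^{\mu_{\bar{x}}}$ and set $\alpha=\min\{\bar{x}_{f,w}:x^{\mu_{\bar{x}}}_{f,w}=1\}$. Lemma \ref{para sus decre} then provides a strongly stable fractional matching $y=(\bar{x}-\alpha x^{\mu_{\bar{x}}})/(1-\alpha)$ with $supp(y)\subset supp(\bar{x})$, so that $\bar{x}=\alpha x^{\mu_{\bar{x}}}+(1-\alpha)y$. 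Since $|supp(y)|<|supp(\bar{x})|$, the induction hypothesis yields $y=\sum_{l=1}^{k'}\beta_l x^{\nu^l}$ with $\nu^1=\mu_y$, $\nu^1\succ_F\cdots\succ_F\nu^{k'}$, and positive $\beta_l$ summing to $1$. Setting $\mu^1=\mu_{\bar{x}}$, $\alpha_1=\alpha$, and $\mu^{l+1}=\nu^l$, $\alpha_{l+1}=(1-\alpha)\beta_l$ for $l=1,\ldots,k'$, I obtain $\bar{x}=\sum_l\alpha_l x^{\mu^l}$ with positive coefficients summing to $\alpha+(1-\alpha)=1$. The chain $\mu^2\succ_F\cdots\succ_F\mu^k$ is inherited from the induction hypothesis, so the whole theorem reduces to the single inequality $\mu^1\succ_F\mu^2$, i.e. $\mu_{\bar{x}}\succ_F\mu_y$.

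This comparison is the heart of the argument. First I would establish the weak inequality $\mu_{\bar{x}}(f)\succeq_f\mu_y(f)$ for every firm $f$. Because $supp(y)\subset supp(\bar{x})$, we have $C^0_f(y)\subseteq C^0_f(\bar{x})$, and every worker appearing in these supports is acceptable to $f$ by constraint (\ref{ecuacion 5}). Under $\boldsymbol{q}$-responsiveness the $\succeq_f$-best selection of at most $q_f$ acceptable workers from a set is monotone in that set, so the top-$q_f$ selection from the larger set $C^0_f(\bar{x})$ is weakly preferred to the top-$q_f$ selection from the smaller set $C^0_f(y)$; since $\mu_{\bar{x}}(f)=C^{q_f}_f(\bar{x})$ and $\mu_y(f)=C^{q_f}_f(y)$, this gives $\mu_{\bar{x}}(f)\succeq_f\mu_y(f)$. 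To upgrade to strict preference for at least one firm, I would use the pair $(f^*,w^*)$ at which the minimum defining $\alpha$ is attained: there $\bar{x}_{f^*,w^*}=\alpha$ and $x^{\mu_{\bar{x}}}_{f^*,w^*}=1$, hence $y_{f^*,w^*}=0$, so $w^*\notin C^0_{f^*}(y)$ and therefore $w^*\notin\mu_y(f^*)=C^{q_{f^*}}_{f^*}(y)$, whereas $w^*\in\mu_{\bar{x}}(f^*)$. Thus $\mu_{\bar{x}}(f^*)\neq\mu_y(f^*)$, and combined with $\mu_{\bar{x}}(f^*)\succeq_{f^*}\mu_y(f^*)$ and antisymmetry of $\succ_{f^*}$ this forces $\mu_{\bar{x}}(f^*)\succ_{f^*}\mu_y(f^*)$. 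Hence $\mu_{\bar{x}}\succ_F\mu_y$, completing the induction.

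I expect the main obstacle to be the monotonicity of the greedy top-$q_f$ selection under $\boldsymbol{q}$-responsive preferences, that is, making precise and proving that shrinking the support weakly worsens each firm's best-$q_f$ bundle. This requires a careful case analysis according to whether $|C^0_f(y)|$ and $|C^0_f(\bar{x})|$ lie below or above $q_f$, together with repeated use of the three responsiveness axioms; by contrast, the convex-coefficient accounting and the extraction of the strict firm $f^*$ follow directly from Lemma \ref{para sus decre} and the definition of $\mu_{\bar{x}}$.
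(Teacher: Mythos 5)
Your proposal is correct and follows essentially the same route as the paper: both peel off $\mu_{\bar{x}}$ via Lemmas \ref{mux stable} and \ref{para sus decre} and terminate by the strict shrinking of the support, with your induction on $|supp(\bar{x})|$ being just a cleaner packaging of the paper's iterative construction. If anything, your treatment of the key comparison $\mu_{\bar{x}}\succ_{F}\mu_{y}$ (monotonicity of the top-$q_f$ selection under responsiveness, plus the strict firm $f^{*}$ extracted from the minimizing pair) is more explicit than the paper's, which asserts $\mu^{1}\succ_{F}\mu^{2}$ directly from the definitions.
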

\begin{proof}
Let $(F,W,P,\boldsymbol{q})$ be a many-to-one matching market and let $\bar{x}$ be a strongly stable fractional matching. By Lemma \ref{mux stable}, $\mu_{\bar{x}}$ is a stable matching. Denote by  $\mu ^{1}=\mu _{\bar{x}}$.

If $\bar{x}=x^{\mu ^{1}}$ (i.e., $\bar{x}$ is a stable matching),  then $\bar{x}$ is represented as in (\ref{Sum-k}) with $k=1$ and $\alpha _{1}=1$.

If $\bar{x}$ $\neq x^{\mu ^{1}}$ (i.e., $\bar{x}$ is not a stable matching), then by Lemma \ref{para sus decre}, there is a strongly stable fractional matching, $x^{2}$,
defined by%
\[
x^{2}=\frac{\bar{x}-\alpha _{1}^{\prime }x^{\mu^{1}}}{1-\alpha _{1}^{\prime
}},
\]%
for some $0<\alpha _{1}^{\prime }<1,$ with $supp\left( x^{2}\right) \subset
supp\left( \bar{x}\right)$.  Then,
\begin{equation}
\bar{x}=\left( 1-\alpha _{1}^{\prime }\right) x^{2}+\alpha _{1}^{\prime
}x^{\mu _{1}}.  \label{Sum-1}
\end{equation}%
for $0<\alpha _{1}^{\prime }<1$ and $supp\left( x^{\mu_{1}}\right) \subset supp\left( \bar{x}\right)$.

By Lemma \ref{mux stable}, $\mu_{x^2}$ is a stable matching. Denote by $\mu ^{2}=$ $\mu _{x^{2}}$.
Notice that, since $supp\left( x^{\mu_{1}}\right) \subset {supp\left( \bar{x}\right)},~ supp\left( x^{2}\right) \subset supp\left( \bar{x}\right)$, then by definitions of $\mu^{1}$ and $x^{2}$, we have that $\mu ^{1}\succ_{F}\mu ^{2}$.

If $x^{2}=x^{\mu ^{2}}$ (i.e., $x^{2}$ is a stable matching), then $\bar{x}$ is represented as in (\ref{Sum-k}).

If $x^{2}\neq $ $x^{\mu^{2}}$ (i.e., $x^{2}$ is not a stable matching), again by Lemma \ref{para sus decre}, there is a strongly stable fractional matching $x^{3}$, defined by
\[
x^{3}=\frac{x^{2}-\alpha _{2}^{\prime }x^{\mu ^{2}}}{1-\alpha _{2}^{\prime }}%
,
\]%
for some $0<\alpha _{2}^{\prime }<1$ with $supp\left( x^{3}\right) \subset
supp\left( x^{2}\right) .$ That is,
\begin{equation}
x^{2}=\left( 1-\alpha _{2}^{\prime }\right) x^{3}+\alpha _{2}^{\prime
}x^{\mu ^{2}}.  \label{Sum-2}
\end{equation}

Since $0<\alpha _{2}^{\prime }<1$, we have that $supp\left( x^{\mu_{2}}\right)\subset{supp\left( x^{2}\right)}$. By Lemma \ref{mux stable}, $\mu_{x^3}$ is a stable matching. Denote by $\mu ^{3}=$ $\mu _{x^{3}}$
Since $supp\left(x^{3}\right) \subset supp\left( x^{2}\right) $, we have that $\mu
^{2}\succ_{F}\mu ^{3}.$ Then, $\mu ^{1}\succ_{F}\mu^{2}\succ_{F}\mu^{3}.$

If $x^{3}=x^{\mu^{3}}$ (i.e., $x^{3}$ is a stable matching), from equalities (\ref{Sum-1}) and (\ref{Sum-2}) we have that
\begin{eqnarray*}
\bar{x} &=&\left( 1-\alpha _{1}^{\prime }\right) x^{2}+\alpha _{1}^{\prime
}x^{\mu _{1}} \\
&=&\left( 1-\alpha _{1}^{\prime }\right) \left( \left( 1-\alpha _{2}^{\prime
}\right) x^{3}+\alpha _{2}^{\prime }x^{\mu ^{2}}\right) +\alpha _{1}^{\prime
}x^{\mu ^{1}} \\
&=&\left( 1-\alpha _{1}^{\prime }\right) \left( 1-\alpha _{2}^{\prime
}\right) x^{3}+\left( 1-\alpha _{1}^{\prime }\right) \alpha _{2}^{\prime
}x^{\mu ^{2}}+\alpha _{1}^{\prime }x^{\mu ^{1}}.
\end{eqnarray*}%
Then $\bar{x}$ is represented as in (\ref{Sum-k}) with $k=3$, $\alpha _{1}=\alpha _{1}^{\prime
}, $ $\alpha _{2}=\left( 1-\alpha _{1}^{\prime }\right) \alpha _{2}^{\prime
},$ and $\alpha _{3}=\left( 1-\alpha _{1}^{\prime }\right) \left( 1-\alpha
_{2}^{\prime }\right)$. Notice that $\alpha_{1}+\alpha_{2}+\alpha_{3}=1$.

If $x^{3}\neq $ $x^{\mu _{3}}$ (i.e., $x^{3}$ is not a stable matching), then we continue this procedure. The finiteness of the $\mathit{supp}\left(  \bar{x}\right)  $
guarantees that this procedure ends by constructing a stable matching.
This proves that $\bar{x}$ is represented as in (\ref{Sum-k}) for some $k\geq 1$.\end{proof}

Recall that the convex polytope $SCP$ has extreme points that are not integer. A salient question now is, are these non-integer extreme points strongly stable fractional matchings?
The following corollary answers this, and states that non-integer extreme points of the convex polytope $SCP$ are not strongly stable fractional matchings.
\begin{corollary}\label{vertice fraccionario no fuertemente estable}
Let $(F,W,P,\boldsymbol{q})$ be a many-to-one matching market. Let $x$ be a non-integer extreme point of the convex polytope \textit{SCP}. Then, $x$ is not a strongly stable fractional matching.
\end{corollary}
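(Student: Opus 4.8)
The plan is to argue by contradiction, leaning entirely on Theorem \ref{sucesion decreciente} together with the defining property of an extreme point of a convex polytope. The only real content is checking that the convex combination supplied by Theorem \ref{sucesion decreciente} is genuinely nontrivial.

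First I would suppose, toward a contradiction, that $x$ is a non-integer extreme point of $SCP$ which is nonetheless a strongly stable fractional matching. Applying Theorem \ref{sucesion decreciente} to $x$, I obtain stable matchings $\mu^{1},\ldots,\mu^{k}$ and scalars $\alpha_{1},\ldots,\alpha_{k}$ with
\[
x=\sum_{l=1}^{k}\alpha_{l}\,x^{\mu^{l}},\qquad 0<\alpha_{l}\leq 1,\qquad \sum_{l=1}^{k}\alpha_{l}=1,\qquad \mu^{1}\succ_{F}\cdots\succ_{F}\mu^{k}.
\]

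Next I would verify the two facts that make this combination nontrivial. Since $x$ is non-integer, it cannot coincide with the integer incidence vector of any single stable matching; hence $k\geq 2$, because $k=1$ would force $x=x^{\mu^{1}}$, an integer point. Moreover, the strict ordering $\mu^{1}\succ_{F}\cdots\succ_{F}\mu^{k}$ shows that the matchings $\mu^{l}$ are pairwise distinct, so the incidence vectors $x^{\mu^{l}}$ are pairwise distinct points. Finally, by Lemma \ref{Ba-Ba stable PC} each $x^{\mu^{l}}$ is an integer solution of $SCP$, hence a point of the polytope $SCP$.

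Putting these together, $x$ is expressed as a convex combination, with strictly positive coefficients $\alpha_{l}$, of at least two distinct points $x^{\mu^{1}},\ldots,x^{\mu^{k}}$ all lying in $SCP$. This directly contradicts the fact that an extreme point of a convex polytope cannot be written as a nontrivial convex combination of distinct points of the polytope. Therefore $x$ cannot be strongly stable, which is the claim. The argument is essentially immediate once Theorem \ref{sucesion decreciente} is available; the one point deserving care is the justification that $k\geq 2$ and that the $x^{\mu^{l}}$ are distinct members of $SCP$, which is precisely what rules out a degenerate (single-point) representation and produces the contradiction.
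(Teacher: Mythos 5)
Your proof is correct and follows the same route as the paper's: invoke Theorem \ref{sucesion decreciente} to write $x$ as a convex combination of distinct stable matchings (integer points of $SCP$) and contradict extremality. You simply spell out the details the paper leaves implicit, namely that non-integrality of $x$ forces $k\geq 2$ and that the strict ordering $\mu^{1}\succ_{F}\cdots\succ_{F}\mu^{k}$ guarantees the incidence vectors are pairwise distinct.
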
  
\begin{proof}
Let $x$ be a non-integer extreme point of the convex polytope $SCP$. Then, $x$ cannot be represented as a convex combination of different extreme points of the same convex polytope. More precisely, $x$ cannot be represented as a convex combination of different \emph{integer} extreme points of the convex polytope $SCP$ (stable matchings). Therefore, by Theorem \ref{sucesion decreciente}, $x$ is not a strongly stable fractional matching.\end{proof}

\subsection{Cycles in Preferences.}

For the marriage market, Irving and Leather \cite{irving1986complexity} define a cycle in preference and a cyclic
matching in order to present an algorithm that finds all stable matchings. Bansal \textit{et al.} \cite{bansal2007polynomial}
 and Cheng \textit{et al.} \cite{cheng2008unified} extend the concept of cycles and cyclic matchings for many-to-many and many-to-one matching markets, respectively. We will state some properties of cycles that are taken from these authors. They refer to the cycles as rotations.
  
Given a stable matching $\mu$ for a many-to-one matching market $(F,W,P,\boldsymbol{q})$, we define a \textbf{reduced preference profile} $\boldsymbol{P^{\mu}}$, as the preference profile obtained after the reduction procedure. This reduction procedure is presented in the Appendix.
The reduced preference list of firm $f$, is denoted by $\succ_{f}^{\mu}$ . In the same way, the reduced preference list of worker $w$, is denoted by  $\succ_{w}^{\mu}$ .
\bigskip
\begin{definition}
Let $(F,W,P,\boldsymbol{q})$ be a many-to-one matching market. Given a stable matching $\mu$, and the reduced preference profile $P^{\mu}$, a set of firms $\sigma=\{e_{1},\ldots,e_{r}\}\subseteq{F}$ defines a \textbf{cycle} if for $w_{e_{1}},\ldots,w_{e_{r}}\in W$ we have that:
\begin{enumerate}
\item[1.] For each $d=1,\ldots,r-1$, $w_{e_{d}}\in{\mu(e_{d+1})}$, $w_{e_{d}}\not\in{\mu(e_{d})}$ and  $w_{e_{d}}\succeq_{e_{d}}w'$ for all $w'\not\in{\mu(e_{d})}$.
\item[2.] $w_{e_{r}}\not\in{\mu(e_{r})}$, $w_{e_{r}}\succeq_{e_{r}}w'$ for all $w'\not\in{\mu(e_{r})}$, and  $w_{e_{r}}\in{\mu(e_{1})}.$
\end{enumerate}
\end{definition}
\bigskip
Given a cycle $\sigma$, we can define a cyclic matching as follows:

\begin{definition}
Let $(F,W,P,\boldsymbol{q})$ be a many-to-one matching market. Given a stable matching $\mu$, and the reduced preference profile $P^{\mu}$, let $\sigma=\{e_{1},\ldots,e_{r}\}$ be a cycle in $P^{\mu}$, and let $\{w_{e_{1}},\ldots,w_{e_{r}}\}$ be the set of workers defined by the cycle $\sigma$. The \textbf{cyclic matching} of $\mu$ is defined as follows:
$$
\mu[\sigma]=\left\{
\begin{array}{lll}
\mu[\sigma](e_{1})=&\mu(e_{1})\setminus\{w_{e_{r}}\}\cup\{w_{e_{1}}\},&\\
\mu[\sigma](e_{d})=&\mu(e_{d})\setminus\{w_{e_{d-1}}\}\cup\{w_{e_{d}}\}&\mbox{ for $d=2,\ldots,r-1$},\\
\mu[\sigma](e_{r})=&\mu(e_{r})\setminus\{w_{e_{r-1}}\}\cup\{w_{e_{r}}\}&\\
\mu[\sigma](f)=&\mu(f)&\mbox{ for all $f\not\in{\sigma}$}.
\end{array}\right.
$$
\end{definition}

In the reduced preference lists of each firm that belongs to the cycle $\sigma$, we have that the preferred worker that is unmatched under $\mu$ is always matched to another firm in the same cycle. Think of each firm $e_d$ in the cycle as being asked to hire its preferred worker that is unmatched in its preference list. Also, this new worker replaces the worker that other firm in the cycle wants to hire.  The firms that do not belong to the cycle $\sigma$, will be matched to the same set of workers. 
Notice that if a firm $f$ belongs to a cycle $\sigma$, this means that it has different sets of workers assigned in $\mu$ as well as in $\mu[\sigma]$. Then, by Theorem \textcolor{blue}{RHT}, we have that $\vert\mu(f)\vert=q_{f}.$
\begin{lemma}[Bansal \textit{et al.} \cite{bansal2007polynomial}]\label{ciclico es estable}
Let $(F,W,P,\boldsymbol{q})$ be a many-to-one matching market. 
\begin{enumerate}
\item[1.] Let $\mu$ be a stable matching and let $\sigma$ be a cycle in $P^{\mu}$. Then, the cyclic matching $\mu[\sigma]$ is a stable matching in the original preference profile.
\item[2.] A matching $\mu'$ is stable under $P^{\mu}$ if and only if $\mu'$ is stable under the original preference profile and $\mu\succeq_{F}\mu'$.
\end{enumerate}
\end{lemma}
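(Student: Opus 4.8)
The plan is to prove the two parts separately. Everything rests on a single monotonicity fact extracted from the cycle, so I would isolate that first and import from the reduction procedure of the Appendix only what is genuinely needed.

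\smallskip
\noindent\textbf{The key fact (worker monotonicity).} Along the cycle (indices read cyclically, with $e_{r+1}:=e_1$ and $w_{e_0}:=w_{e_r}$), each firm $e_d$ trades the held worker $w_{e_{d-1}}$ for $w_{e_d}$, while each reassigned worker $w_{e_d}$ moves from its $\mu$-partner $e_{d+1}$ to $e_d$. I would first record that $\mu[\sigma](w)\succeq_w\mu(w)$ for every $w\in W$, with equality off the cycle and with each cycle worker improving strictly, i.e. $e_d\succ_{w_{e_d}}e_{d+1}$. This is exactly what the reduction guarantees: $P^\mu$ keeps, in each worker's list, the partner $\mu(w)$ together with the strictly better firms attainable in stable matchings $F$-below $\mu$, so following a cycle moves each worker up its reduced list. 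I would also note that $\mu[\sigma]$ uses only acceptable pairs: $w_{e_d}$ occurs in $e_d$'s reduced list hence is acceptable to $e_d$, and since $e_d\succ_{w_{e_d}}\mu(w_{e_d})$ with $\mu$ individually rational, $e_d$ is acceptable to $w_{e_d}$. I expect this translation, from the combinatorial cycle definition to the ``worker weakly improves, pair stays acceptable'' guarantee, to be the main obstacle, precisely because it cannot be obtained from lattice opposition of interests (that would be circular, as the stability of $\mu[\sigma]$ is still to be proved).

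\smallskip
\noindent\textbf{Part 1, by a blocking-pair contradiction.} First I would verify $\mu[\sigma]$ is a matching: each $e_d\in\sigma$ has $|\mu(e_d)|=q_{e_d}$ (by the Rural Hospital Theorem, as noted before the statement), so replacing one worker by one worker preserves the quota, and bilateral consistency holds because the worker $w_{e_{d-1}}$ leaving $e_d$ is absorbed by $e_{d-1}$. Individual rationality is the acceptability remark above. Now suppose $(f,w)$ blocks $\mu[\sigma]$; then $f\succ_w\mu[\sigma](w)$, so by worker monotonicity $f\succ_w\mu(w)$ too. If $f\notin\sigma$ then $\mu[\sigma](f)=\mu(f)$, the firm-side blocking condition is identical at $\mu$ and $\mu[\sigma]$, and $(f,w)$ would block the stable $\mu$, a contradiction. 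If $f=e_d\in\sigma$, write $\mu[\sigma](e_d)=\mu(e_d)\setminus\{w_{e_{d-1}}\}\cup\{w_{e_d}\}$ and split on $w$. If $w=w_{e_{d-1}}$, then $\mu[\sigma](w_{e_{d-1}})=e_{d-1}\succeq_{w_{e_{d-1}}}e_d=f$ by monotonicity, contradicting $f\succ_w\mu[\sigma](w)$. Otherwise $w\notin\mu(e_d)$, so $w_{e_d}\succ_{e_d}w$ by the maximality in the cycle definition; since $e_d$ is full, blocking forces $w\succ_{e_d}w''$ for some $w''\in\mu[\sigma](e_d)$, and $w''\neq w_{e_d}$ (else $w\succ_{e_d}w_{e_d}$), hence $w''\in\mu(e_d)$, making $(e_d,w)$ block $\mu$, again a contradiction. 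Thus $\mu[\sigma]$ is stable under $P$.

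\smallskip
\noindent\textbf{Part 2, monotone transfer.} Here I would use that $P^\mu$ is obtained from $P$ by deleting acceptable pairs while preserving the relative order of the survivors. For the ``if'' direction, given $\mu'$ stable under $P$ with $\mu\succeq_F\mu'$, the correctness of the reduction (Appendix) ensures every pair used by $\mu'$ survives in $P^\mu$ (this is where $\mu'\preceq_F\mu$ enters); since $\mu'$ has no $P$-blocking pair and reduced preferences agree with $P$ on survivors, any $P^\mu$-blocking pair would be a $P$-blocking pair, so $\mu'$ is stable under $P^\mu$. For the ``only if'' direction, if $\mu'$ is stable under $P^\mu$ then each $\mu'(f)$ lies in $f$'s reduced list, whose workers are all $\preceq_f\mu(f)$; responsiveness then gives $\mu'(f)\preceq_f\mu(f)$ for every $f$, i.e. $\mu\succeq_F\mu'$. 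Stability under $P$ follows because a deleted pair cannot block a $P^\mu$-stable matching, the reduction deleting a pair only when it is dominated in every stable matching $F$-below $\mu$, so restoring the deleted pairs creates no new blocking pair. The content imported from the Appendix is exactly the correctness of the deletion step; the remainder is the routine transfer of stability between a profile and a sublist profile.
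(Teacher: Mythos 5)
A preliminary remark: the paper itself does not prove this lemma --- it is imported from Bansal \textit{et al.}\ \cite{bansal2007polynomial} --- so there is no in-house proof to compare against, and your attempt must stand on its own. Your architecture (worker monotonicity along the cycle plus a blocking-pair analysis for Part 1; transfer of stability between $P$ and the sublist profile $P^{\mu}$ for Part 2) is the standard and correct one, and the monotonicity fact is established correctly. Two steps, however, do not go through as written. In Part 1, for a putative blocking pair $(e_d,w)$ with $e_d\in\sigma$ and $w\notin\mu(e_d)$, you conclude $w_{e_d}\succ_{e_d}w$ ``by the maximality in the cycle definition''. That maximality quantifies only over workers on $e_d$'s \emph{reduced} list, whereas a blocking pair for $\mu[\sigma]$ in the original profile $P$ may involve any worker acceptable to $e_d$ under $P$. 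Workers deleted from $e_d$'s reduced list need a separate argument (Step-1 deletions would already block $\mu$; Step-2 deletions lie below everything in $\mu[\sigma](e_d)$; Step-3 deletions force a block of $\mu_W$ or contradict $f\succ_w\mu(w)$): this is exactly the ``correctness of the deletion step'' that you invoke only for Part 2, but Part 1 needs it as well. A cleaner repair is to prove $\mu[\sigma]$ stable under $P^{\mu}$, where every relevant agent is on a reduced list, and then lift to $P$ via Part 2. Relatedly, citing the Rural Hospital Theorem for $\vert\mu(e_d)\vert=q_{e_d}$ is circular at this point, since the paper's remark obtains that equality by comparing the two \emph{stable} matchings $\mu$ and $\mu[\sigma]$, the stability of the latter being precisely what is to be proved; the non-circular reason is that an unfilled $e_d$ would already be blocked together with $w_{e_d}$ at $\mu$.

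In Part 2, the ``only if'' derivation of $\mu\succeq_F\mu'$ is invalid as stated: knowing that every worker on $f$'s reduced list is $\preceq_f$ the best worker of $\mu(f)$ does not, under responsiveness, yield $\mu(f)\succeq_f\mu'(f)$. For instance, with $w_1\succ_f w_2\succ_f w_3\succ_f w_4$, the set $\{w_1,w_2\}$ consists entirely of workers weakly below $w_1$, the top of $\{w_1,w_4\}$, and yet $\{w_1,w_2\}\succ_f\{w_1,w_4\}$ responsively. What is actually needed is the (true but nontrivial) assertion recorded at the end of the Appendix that $\mu$ is the $F$-optimal stable matching of the reduced market, from which domination of every $P^{\mu}$-stable $\mu'$ follows. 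Likewise, justifying ``a deleted pair cannot block'' on the grounds that such a pair ``is dominated in every stable matching $F$-below $\mu$'' presupposes that $\mu'$ is already known to be such a matching in the original market, which is the conclusion you are after; the non-circular route is to check, pair by deleted pair, that the deletion rule itself rules out blocking.
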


Let $\boldsymbol{\Phi(\mu)}$ denote the set of cycles of the reduced preference profile $P^{\mu}.$
Now, we can extend the definition of a cyclic matching as follows.\bigskip
\begin{definition}\label{defino matching ciclico generico}
Let $(F,W,P,\boldsymbol{q})$ be a many-to-one matching market. For a stable matching $\mu$, and the reduced preference profile $P^{\mu}$, let $K\subseteq\Phi(\mu)$, define the \textbf{cyclic matching} $\boldsymbol{\mu[K]}$ as follows:
\begin{enumerate}
\item If $K=\emptyset$, then $\mu[K]=\mu.$
\item If $K\neq\emptyset$, and $K=\{\sigma_{1},\ldots,\sigma_{n}\}$, then
$$
\mu[K](f)=\left\{
\begin{array}{ll}
\mu[\sigma_{h}](f)&~~~~f\in{\sigma_{h}},~ h=1,\ldots,n\\
\mu(f) & ~~~~\mbox{otherwise.}
\end{array}
\right.
$$

\end{enumerate} 
\end{definition}

\begin{lemma}[Cheng \textit{et al.} \cite{cheng2008unified}] \label{ciclico disjuntos}
Let $(F,W,P,\boldsymbol{q})$ be a many-to-one matching market, and let  $P^{\mu}$ be the reduced preference profile at $\mu$.
\begin{enumerate}
\item[1.] Let $\sigma$ and $\sigma'$ be two different cycles. Then, $\sigma\cap\sigma'=\emptyset$.
\item[2.] Let $\mu'$ be a stable matching in $P^{\mu}$. If $\mu\neq\mu'$, then there is a cycle $\sigma\in{P^{\mu}}$  such that $\mu[\sigma]\succeq_{F}\mu'.$
\end{enumerate}
\end{lemma}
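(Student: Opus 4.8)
The plan is to treat the two parts separately, handling the combinatorial disjointness first and the lattice-type rotation statement second.

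For part~1 the key observation I would isolate is that the successor of a firm inside any cycle is \emph{uniquely} determined. If a firm $f$ lies on a cycle, the worker $w_f$ attached to it is, by definition, the $\succeq_f$-maximal element of $f$'s reduced list among those not in $\mu(f)$; since $\succ^\mu_f$ is a strict total order this $w_f$ is unique, and the firm succeeding $f$ in the cycle is the one holding $w_f$ under $\mu$, which is again unique because each worker is matched to at most one firm. I would therefore define a partial map $\phi$ on $F$ sending each firm to this forced successor, so that \emph{every} cycle is exactly a $\phi$-orbit that closes up. In such a functional graph the cycle through a given vertex is unique: if $f\in\sigma\cap\sigma'$, iterating $\phi$ from $f$ retraces the same firms (and the same attached workers) in both $\sigma$ and $\sigma'$ until it first returns to $f$, so $\sigma=\sigma'$. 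Contraposing yields that distinct cycles are disjoint.

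For part~2 I would first invoke Lemma~\ref{ciclico es estable}(2): since $\mu'$ is stable in $P^{\mu}$ it is stable in the original profile and $\mu\succeq_F\mu'$, and as $\mu\neq\mu'$ we in fact have $\mu\succ_F\mu'$. By the Rural Hospital Theorem every firm at which $\mu$ and $\mu'$ differ fills its quota, so for such a firm both $\mu(f)$ and $\mu'(f)$ have size $q_f$ and hence $\mu'(f)\setminus\mu(f)\neq\emptyset$. The heart of the argument is to build the desired cycle by \emph{following target workers}: start from a firm $e_1$ with $\mu(e_1)\neq\mu'(e_1)$, let $w_{e_1}$ be its target worker in $P^{\mu}$, let $e_2$ be the firm holding $w_{e_1}$ under $\mu$, and iterate. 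Here I would use that workers on a reduced list are stable partners and hence matched under $\mu$ (so a successor always exists), and that the worker-side opposition of interest forces each newly reached firm again to differ from $\mu'$; finiteness of $F$ then makes the sequence revisit a firm, and the revisited segment is a set $\sigma$ that one checks directly against the cycle definition.

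The remaining and principal step is to verify that this single rotation does not overshoot, i.e.\ $\mu[\sigma]\succeq_F\mu'$. Firms outside $\sigma$ are unchanged by $\mu[\sigma]$ and are weakly preferred to $\mu'$ because $\mu\succeq_F\mu'$. For a firm $e_d\in\sigma$, under $\mu[\sigma]$ it exchanges $w_{e_{d-1}}\in\mu(e_d)$ for its target $w_{e_d}$; I would use the defining property guaranteed by the reduction procedure, namely that the target worker is the $\succeq_{e_d}$-best \emph{new} partner $e_d$ can attain in any stable matching dominated by $\mu$, to conclude that $w_{e_d}\succeq_{e_d}$ the corresponding worker that $\mu'$ assigns to $e_d$, and hence $\mu[\sigma](e_d)\succeq_{e_d}\mu'(e_d)$ by responsiveness. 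I expect this non-overshooting step to be the main obstacle: it is the only place that uses the quantitative content of the reduction (deferred to the Appendix), namely that moving each firm to its target worker is the minimal admissible downward step and therefore cannot carry $\mu[\sigma]$ below a stable matching that $\mu$ already dominates.
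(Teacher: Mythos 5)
First, a point of comparison: the paper does not prove this lemma at all --- it is imported verbatim from Cheng \emph{et al.} \cite{cheng2008unified} (only Lemmas \ref{mux stable}, \ref{para sus decre}, \ref{independencia del orden}, \ref{fuertemente al medio} and \ref{ss en reducido entonces ss en original} are proved in the Appendix). So there is no in-paper argument to match yours against, and I can only assess your reconstruction on its own merits.

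Your proof of part~1 is correct and complete in outline: the target worker $w_f$ is unique by strictness of $\succ^{\mu}_f$, the firm $\mu(w_f)$ holding it is unique because $\mu$ is a matching, so every cycle is a periodic orbit of a single partial map $\phi$ on $F$, and a vertex of a functional graph lies on at most one such orbit. Nothing is missing there.

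For part~2 the overall strategy (follow target workers from a firm where $\mu$ and $\mu'$ differ, close up by finiteness, then check non-overshooting) is the right one, but the final inference contains a genuine gap. You argue that $w_{e_d}\succeq_{e_d} v$ for ``the corresponding worker'' $v$ that $\mu'$ assigns to $e_d$, ``and hence $\mu[\sigma](e_d)\succeq_{e_d}\mu'(e_d)$ by responsiveness.'' Responsiveness only orders two sets that differ in a \emph{single} worker, whereas $\mu[\sigma](e_d)=\mu(e_d)\setminus\{w_{e_{d-1}}\}\cup\{w_{e_d}\}$ and $\mu'(e_d)$ may differ in several workers, and there is no canonical ``corresponding worker.'' Concretely, if $\mu(e_d)=\{a,b\}$, $\mu'(e_d)=\{a,c\}$, the evicted worker is $w_{e_{d-1}}=a$ and $a\succ_{e_d}b\succ_{e_d}w_{e_d}\succ_{e_d}c$, then $\mu[\sigma](e_d)=\{b,w_{e_d}\}$ versus $\{a,c\}$ trades one worker up and one worker down, and $\boldsymbol{q}$-responsiveness alone does not rank these sets. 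To close this you need the alignment property of stable assignments under responsive preferences (if $\mu(f)\succ_f\mu'(f)$ for stable $\mu,\mu'$, then $f$ prefers every worker of $\mu(f)\setminus\mu'(f)$ to every worker of $\mu'(f)\setminus\mu(f)$), together with control over \emph{which} worker $w_{e_{d-1}}$ the rotation evicts; this is precisely the extra structure Cheng \emph{et al.} build into their meta-rotations. A secondary, more easily repaired omission is the propagation step: the claim that each successor firm $e_{d+1}=\mu(w_{e_d})$ again satisfies $\mu(e_{d+1})\neq\mu'(e_{d+1})$ needs an explicit blocking-pair argument for $(e_d,w_{e_d})$ against $\mu'$, using that $e_{d+1}$ is the last entry of $w_{e_d}$'s reduced list; you assert it but do not supply it.
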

\begin{remark}\label{mu de K es igual a mu de sigma}
Let $K\subseteq\Phi(\mu)$ be a subset of cycles of $P^{\mu}$. By Lemma (\ref{ciclico disjuntos}), we have that $\mu[K](f)=\mu[\sigma](f)$ for each $f\in{\sigma}$ with $\sigma\in{K}$ .
\end{remark}

Notice that Lemmas \ref{ciclico es estable} and \ref{ciclico disjuntos} assure that the cyclic matching $\mu[K]$ from Definition \ref{defino matching ciclico generico} is stable under the original preference profile.

The following lemma states that, the matching obtained by applying different cycles is independent from the order in which they are applied.
\begin{lemma}\label{independencia del orden}
Let $(F,W,P,\boldsymbol{q})$ be a many-to-one matching market. Let $P^{\mu}$ be the reduced preference profile at $\mu$, and let $\sigma$ and $\sigma'$ be two different cycles  in $\Phi(\mu)$.

\begin{enumerate}[1.]
\item $\sigma$ is a cycle of $P^{\mu[\sigma']}$.
\item $\mu[\sigma,\sigma']=\mu[\sigma',\sigma].$
\end{enumerate}
\end{lemma}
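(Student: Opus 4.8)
The plan is to derive both statements from a single structural fact: by Lemma \ref{ciclico disjuntos}.1 the cycles $\sigma$ and $\sigma'$ are disjoint as sets of firms, and by the definition of the cyclic matching $\mu[\sigma']$ differs from $\mu$ only on the firms belonging to $\sigma'$. Consequently, writing $\sigma=\{e_1,\dots,e_r\}$, every firm $e_d\in\sigma$ lies outside $\sigma'$, so $\mu[\sigma'](e_d)=\mu(e_d)$; moreover the workers $w_{e_1},\dots,w_{e_r}$ singled out by $\sigma$ are matched under $\mu$ to firms of $\sigma$ (since $w_{e_d}\in\mu(e_{d+1})$ and $w_{e_r}\in\mu(e_1)$), hence they are disjoint from the workers singled out by $\sigma'$. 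Statement 1 is a prerequisite for statement 2, since the expression $\mu[\sigma,\sigma']=(\mu[\sigma])[\sigma']$ only makes sense once we know that $\sigma'$ is a cycle of $P^{\mu[\sigma]}$ (and symmetrically for $\sigma$ in $P^{\mu[\sigma']}$).

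For statement 1, I would verify the two defining conditions of a cycle for $\sigma$ with $\mu$ replaced by $\mu[\sigma']$. The membership requirements $w_{e_d}\in\mu(e_{d+1})$, $w_{e_d}\notin\mu(e_d)$ and $w_{e_r}\in\mu(e_1)$ all refer to matched sets of firms in $\sigma$; since $\mu[\sigma'](e)=\mu(e)$ for every $e\in\sigma$, they transfer verbatim. The remaining requirement is that $w_{e_d}$ is still the most preferred worker outside $\mu[\sigma'](e_d)$ in the reduced list $\succ_{e_d}^{\mu[\sigma']}$. The key point is that the reduction at $\mu[\sigma']$ agrees, on each firm $e_d\in\sigma$ and on the comparisons relevant to $\sigma$, with the reduction at $\mu$: the matched set of $e_d$ is unchanged, $e_d\notin\sigma'$, and the workers touched by $\sigma'$ are disjoint from those touched by $\sigma$, so none of the deletions triggered by passing from $\mu$ to $\mu[\sigma']$ affects the relative ranking that makes $w_{e_d}$ the top available worker of $e_d$. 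I would establish this agreement by appealing to the reduction procedure and its supporting lemmas in the Appendix. This local invariance of the reduction is the main obstacle; the rest is bookkeeping.

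With statement 1 in hand — together with its mirror image that $\sigma'$ is a cycle of $P^{\mu[\sigma]}$ — statement 2 follows by comparing $\mu[\sigma,\sigma']$ and $\mu[\sigma',\sigma]$ firm by firm. Applying a cycle leaves all firms outside it unchanged, and by disjointness $\mu[\sigma'](f)=\mu(f)$ for $f\in\sigma$ while $\mu[\sigma](f)=\mu(f)$ for $f\in\sigma'$; hence the cycle $\sigma'$ acts on $\mu[\sigma]$ exactly as it acts on $\mu$, and symmetrically $\sigma$ acts on $\mu[\sigma']$ exactly as on $\mu$. Thus for $f\in\sigma$ both matchings return $\mu[\sigma](f)$, for $f\in\sigma'$ both return $\mu[\sigma'](f)$, and for $f\notin\sigma\cup\sigma'$ both return $\mu(f)$. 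Therefore $\mu[\sigma,\sigma']=\mu[\sigma',\sigma]$, and this common matching coincides with $\mu[\{\sigma,\sigma'\}]$ of Definition \ref{defino matching ciclico generico}, in accordance with Remark \ref{mu de K es igual a mu de sigma}.
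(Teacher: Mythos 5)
Your proposal is correct and follows essentially the same route as the paper's proof: disjointness of the cycles (Lemma \ref{ciclico disjuntos}) implies that applying $\sigma'$ leaves every firm of $\sigma$ with its $\mu$-assignment, so $\sigma$ survives as a cycle of $P^{\mu[\sigma']}$, and the firm-by-firm case analysis then yields commutativity, with both orderings coinciding with $\mu[\{\sigma,\sigma'\}]$ from Definition \ref{defino matching ciclico generico}. If anything, you are more explicit than the paper about the one genuine subtlety --- that the reduced lists at $\mu[\sigma']$ must agree with those at $\mu$ on the comparisons relevant to $\sigma$ --- a point the paper's own proof passes over in silence.
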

\begin{proof}
See the Appendix.\end{proof}

\section{A characterization of the set of strongly stable fractional matchings.}
In this section, we present our main findings. Our aim is to describe the relationship among the stable matchings involved in the lottery that represents a strongly stable fractional matching.   
Our characterization of a strongly stable fractional matching is based on the idea of cyclic matchings.  For a many-to-one matching market, finding all stable matchings via cycles in preference and cyclic matchings requires a polynomial time algorithm, (see Bansal \textit{et al.} \cite{bansal2007polynomial}). The importance of our characterization is to present an elegant and useful way to describe the strongly stable fractional matchings via the stable matchings involved in the lottery that generates them.  
Since our result is based the idea of cycles and cyclic matchings,  we need that a stable fractional matching that is strongly stable in a reduced preference profile is also strongly stable in the original preference profile. This statement is proved on Lemma \ref{ss en reducido entonces ss en original} in the Appendix.

In order to present our characterization, we define a connected set generated by a stable matching.\bigskip
\begin{definition}\label{defino conjunto conectado}
Let $(F,W,P,\boldsymbol{q})$ be a many-to-one matching market. A set of stable matchings $\mathcal{M}$ is \textbf{connected} if there is a stable matching $\mu$ and a set of cycles $K'\subseteq\Phi(\mu)$ such that
$$
\mathcal{M}=\mathcal{M}_{\mu}^{K'},
$$
where $\mathcal{M}_{\mu}^{K'}=\{\mu[K]:K\subseteq K'\}.$ Let us denote by $\mathcal{M}_\mu=\mathcal{M}_\mu^{\Phi(\mu)}.$
\end{definition}

Notice that from the Definition \ref{defino matching ciclico generico}, we can see that $\mu$ is also a cyclic matching of itself and, for each $K'\subseteq\Phi(\mu)$, we have that  $\mu\in{\mathcal{M}^{K'}_{\mu}}$.

The main result of this paper states that $\bar{x}$ is a strongly stable fractional matching if and only if it belongs to the convex hull of a connected set. Formally:
\begin{theorem}\label{caraterizacion}
Let $(F,W,P,\boldsymbol{q})$ be a many-to-one matching market and let $\bar{x}$ be a stable fractional matching in $( F,W,P).$ Then, $\bar{x}$ is strongly stable if and only if there is a collection of connected stable matchings $\{\mu ^{1},\ldots,\mu ^{k} \}$, such that $\bar{x}=\sum_{l=1}^{k}\alpha _{l}x^{\mu ^{l}},~ 0<\alpha _{l}\leq
1,\text{ and } \sum_{l=1}^{k}\alpha _{l}=1.$
\end{theorem}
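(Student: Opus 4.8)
The plan is to prove the two implications separately, using Theorem~\ref{sucesion decreciente} for the ``only if'' direction and a direct verification of the strong stability condition for the ``if'' direction. The first observation, valid for both directions, is a reformulation of~(\ref{condition SS(P)}). Writing $\bar x=\sum_{l}\alpha_l x^{\mu^l}$, for a fixed acceptable pair $(f,w)$ set $a_l=q_f-\sum_{j\succeq_f w}x^{\mu^l}_{f,j}\ge 0$ and $b_l=1-\sum_{i\succeq_w f}x^{\mu^l}_{i,w}\in\{0,1\}$. Then the first bracket of~(\ref{condition SS(P)}) equals $\sum_l\alpha_l a_l$ and the second equals $\sum_l\alpha_l b_l$, so, since all weights are positive, these vanish exactly when $a_l=0$ for every $l$, respectively $b_l=0$ for every $l$. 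Because each $\mu^l$ is stable, Remark~\ref{matching estable cumple condicion} gives $a_l b_l=0$ for each single $l$. Hence the whole problem reduces to showing, for every $(f,w)\in A$, that either $a_l=0$ for all $l$ or $b_l=0$ for all $l$; equivalently, to ruling out the ``mixed'' situation in which some $\mu^{l_1}$ has $a_{l_1}>0$ (forcing $b_{l_1}=0$) while some $\mu^{l_2}$ has $b_{l_2}>0$ (forcing $a_{l_2}=0$).

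For the ``if'' direction I assume $\bar x=\sum_l\alpha_l x^{\mu^l}$ with $\{\mu^l\}\subseteq\mathcal M_\mu^{K'}$, so $\mu^l=\mu[K_l]$ for $K_l\subseteq K'$. The engine of the argument is the disjointness of cycles (Lemma~\ref{ciclico disjuntos}.1): each firm belongs to at most one cycle of $K'$ and each worker is moved by at most one cycle of $K'$, so $a_l$ depends only on whether the unique cycle $\sigma_f\ni f$ lies in $K_l$, and $b_l$ only on whether the unique cycle $\sigma_w$ moving $w$ lies in $K_l$. Since applying a cycle makes the firms involved worse (Lemma~\ref{ciclico es estable}.2) and, by the opposition of firm and worker interests on the set of stable matchings, moves the workers involved to preferred firms, $a_l$ is monotone increasing and $b_l$ monotone decreasing in the application of the relevant cycle. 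A mixed situation therefore forces $\sigma_f$ and $\sigma_w$ to exist, to be ON in $K_{l_1}$ and OFF in $K_{l_2}$, with the worker that $\sigma_f$ brings to $f$ ranked below $w$. If $\sigma_f\neq\sigma_w$ I toggle them independently: the matching $\mu[\{\sigma_f\}]\in\mathcal M_\mu^{K'}$ is stable, yet $f$ holds a worker worse than $w$ while $w$ sits at a firm worse than $f$, so $(f,w)$ blocks it, a contradiction. If $\sigma_f=\sigma_w$, independent toggling is impossible and one must instead descend to the reduced profile: the worker this cycle brings to $f$ is, by definition, $f$'s best worker outside $\mu(f)$ in $\succ^\mu_f$, so if $w$ still lies in $f$'s reduced list that worker would be $\succeq_f w$, contradicting that it is below $w$; and if $w$ has been deleted from $f$'s reduced list, the reduction procedure (the relevant appendix lemma) guarantees that $w$ is matched above $f$ throughout $\mathcal M_\mu^{K'}$, i.e.\ $b_l=0$ for all $l$, again excluding the mixed situation.

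For the ``only if'' direction I start from Theorem~\ref{sucesion decreciente}, which already yields $\bar x=\sum_{l=1}^k\alpha_l x^{\mu^l}$ with a chain $\mu^1\succ_F\cdots\succ_F\mu^k$ of stable matchings. It remains to place this chain inside a single connected set, and the natural candidate is $\mathcal M_{\mu^1}$ generated by the top element. I would prove the stronger statement that every stable $\mu'$ with $\mu^1\succeq_F\mu'$ equals $\mu^1[K]$ for some $K\subseteq\Phi(\mu^1)$, by an induction that peels off one cycle at a time: Lemma~\ref{ciclico disjuntos}.2 provides a cycle $\sigma\in\Phi(\mu^1)$ with $\mu^1[\sigma]\succeq_F\mu'$, and since $\mu'$ is then stable in $P^{\mu^1[\sigma]}$ (Lemma~\ref{ciclico es estable}.2) the inductive hypothesis applies to $\mu^1[\sigma]$, while the persistence and commutativity of cycles (Lemma~\ref{independencia del orden}) let me reassemble the cycles used into a single subset of $\Phi(\mu^1)$. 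Feeding the resulting representations $\mu^l=\mu^1[K_l]$ back into $\bar x_{f,w}=\sum_{l:\,w\in\mu^l(f)}\alpha_l$ yields the explicit description of the matched agents recorded in~(\ref{para el corolario}).

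The main obstacle is the same-cycle case $\sigma_f=\sigma_w$ of the ``if'' direction: there the two factors of~(\ref{condition SS(P)}) are governed by one and the same toggle, so a purely order-theoretic or blocking-pair argument is unavailable and one is forced down to the reduced preference lists and to the precise deletion rule of the reduction procedure. A secondary difficulty, in the ``only if'' direction, is checking that peeling off cycles never creates genuinely new cycles outside $\Phi(\mu^1)$, so that the induction stays within the single connected set $\mathcal M_{\mu^1}$; this is exactly the point at which Lemma~\ref{independencia del orden} is indispensable.
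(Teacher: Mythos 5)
Your ``if'' direction is essentially correct and is a genuinely cleaner reorganization of the paper's argument: reducing condition (\ref{condition SS(P)}) to excluding a ``mixed'' pair ($a_{l_1}>0$ for some $l_1$, $b_{l_2}>0$ for some $l_2$), and then toggling the two relevant disjoint cycles independently inside $K'$ to manufacture an integer stable matching that $(f,w)$ blocks, replaces the paper's explicit computation (\ref{para el corolario}) and its three subcases. As you anticipate, the same-cycle case $\sigma_f=\sigma_w$ cannot be handled this way and forces you into the reduced lists and the deleted-pair analysis; that is exactly the content of the paper's Lemma \ref{ss en reducido entonces ss en original}, and your sketch there is terse (in some branches the conclusion is ``$a_l=0$ for all $l$'' rather than ``$b_l=0$ for all $l$'') but completable.

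The ``only if'' direction, however, contains a fatal gap. The ``stronger statement'' you propose --- that every stable $\mu'$ with $\mu^1\succeq_F\mu'$ equals $\mu^1[K]$ for some $K\subseteq\Phi(\mu^1)$ --- is false, and if it were true the word ``connected'' in the theorem would be vacuous and Theorem \ref{caraterizacion} would collapse into Theorem \ref{sucesion decreciente}. The set $\mathcal{M}_{\mu^1}$ only contains matchings reachable by applying a subset of the cycles \emph{simultaneously exposed} at $\mu^1$; in general, reaching a lower stable matching requires a cycle that is exposed only in $P^{\mu^1[\sigma]}$ and not in $P^{\mu^1}$. Already in the marriage instance with three firms and three workers whose lists are cyclic shifts of one another there are three stable matchings $\mu^1\succ_F\mu^2\succ_F\mu^3$ with $\Phi(\mu^1)$ a single cycle $\sigma$, $\mu^1[\sigma]=\mu^2$, and $\mu^3\notin\mathcal{M}_{\mu^1}$; the uniform lottery over the three is then a chain-ordered convex combination of stable matchings that is \emph{not} strongly stable. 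Your induction breaks at precisely the point you flag as a ``secondary difficulty'': Lemma \ref{independencia del orden} guarantees that the cycles of $\Phi(\mu^1)$ persist after applying a disjoint cycle, but it does not prevent \emph{new} cycles from being exposed, and those new cycles are exactly what carry $\mu'$ outside $\mathcal{M}_{\mu^1}$. The paper closes this gap by using the strong stability of $\bar x$ a second time: assuming some $\mu^t$ of the decomposition lies outside $\mathcal{M}_{\mu^1}$, its Cases 1 and 2 exhibit a concrete pair at which condition (\ref{condition SS(P)}) fails. Your argument never invokes strong stability beyond Theorem \ref{sucesion decreciente}, which is strictly weaker than what must be proved.
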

\begin{proof}
Let $(F,W,P,\boldsymbol{q}) $ be a many-to-one matching market.

 $\boldsymbol{(\Longrightarrow)}$ Let $\bar{x}$ be a strongly stable fractional matching. Theorem \ref{sucesion decreciente},  assures that there are stable matchings $\mu^1,\ldots,\mu^k$ and real numbers $\alpha_1,\ldots,\alpha_k$ such that 
$$
\bar{x}=\sum_{l=1}^{k}\alpha _{l}x^{\mu ^{l}},~~ 0<\alpha _{l}\leq
1,~~ \sum_{l=1}^{k}\alpha _{l}=1,  \text{ and }\mu ^{1}\succ_{F}\mu ^{2}\succ_{F}\ldots\succ_{F}\mu ^{k}.
$$

Denote by $Conv\left\{ \mathcal{M}_{\mu ^1}\right\}$ the convex hull of elements of $ \mathcal{M}_{\mu ^1}.$ 
Assume, by way of contradiction, that  $\bar{x}\notin Conv\left\{ \mathcal{M}_{\mu ^1}\right\}$. Then,  there is $\mu^{t}$ of the convex combination of $\bar{x}$ such that $\mu^1,\ldots,\mu^{t-1}\in\mathcal{M}_{\mu ^1}$ and  $\mu^{t}\notin \mathcal{M}_{\mu ^1}$.
Let $K'\subseteq\Phi(\mu)$ be the set of cycles such that $\mu[K']=\mu^{t-1}$. Notice that, $\mathcal{M}^{K'}_{\mu ^1}\subseteq \mathcal{M}_{\mu ^1}$ is the smallest connected set generated by $\mu^1$ such that $\mu^1,\ldots,\mu^{t-1}\in\mathcal{M}^{K'}_{\mu ^1}$ and  $\mu^{t}\notin \mathcal{M}^{K'}_{\mu ^1}$.

Then, by Lemma \ref{ciclico es estable} item 2) and Lemma \ref{ciclico disjuntos} item 2)  there is a cycle $\sigma^{\star}\in\Phi\left(  \mu^{t-1}\right)  $
 such that $\mu^{t-1}[\sigma^{\star}]\not \in \mathcal{M}^{K'}_{\mu^{1}}$  and $\mu^{t-1}[ \sigma^{\star}] \succeq_{F}\mu^{t}.$ Notice that this implies that $\sigma^{\star}\notin K'$.

Here we analyse two cases:
\begin{itemize}
\item[\textbf{Case 1:}]\textbf{ If there is $\boldsymbol{\sigma\in K'}$ such that $\boldsymbol{\sigma^{\star}\cap\sigma\neq \emptyset}.$} 

Notice that in this case, by Lemma \ref{ciclico disjuntos} item 1), $\sigma^{\star}\notin \Phi(\mu^1)$. Then for any $\tilde{f}\in{\sigma^{\star}\cap\sigma}$, we have that, $\mu^{1}\succ_{\tilde{f}}\mu^{t-1}\succ_{\tilde{f}}
\mu^{t-1}[\sigma^{\star}]  \succeq_{\tilde{f}}\mu^{t}$.
If $t=2$, we have that $\sigma^{\star}\in{K'}$, and since $\sigma\cap \sigma^{\star}\neq\emptyset$, then $\sigma=\sigma^{\star}$ which results in a contradiction. Therefore $t\geq3$.

By Theorem \textcolor{blue}{RHT}, there are $w^{\star},~w_{1},~w_{2}$ such that:
\begin{equation}\label{w en el medio}
\begin{array}{l}
w_{1}\in{\mu^{1}(\tilde{f})-\left(\mu^{t-1}(\tilde{f})\cup \mu^{t-1}[\sigma^{\star}](\tilde{f})\right)},\\
w^{\star}\in{\mu^{t-1}(\tilde{f})-\mu^{1}(\tilde{f})},\\
w_{2}\in{\mu^{t-1}[\sigma^{\star}](\tilde{f})-\left(\mu^{t-1}(\tilde{f})\cup \mu^{1}(\tilde{f})\right)},
\end{array}
\end{equation}

and $w_{1}\succ_{\tilde{f}}w^{\star}\succ_{\tilde{f}}w_{2}$. Now, we prove that for the pair $(\tilde{f},w^{\star})$, condition (\ref{condition SS(P)}) fails. That is, 
$$
\left[q_{\tilde{f}}-\sum_{j\succeq_{\tilde{f}}w^{\star}}\bar{x}_{\tilde{f},j}\right]\cdot\left[1-\sum_{i\succeq_{w^{\star}}\tilde{f}}\bar{x}_{i,w^{\star}}\right]\neq 0.
$$
We analyse the two factors separately:
\begin{itemize}
\item[\textbf{Case 1.1:}] $\boldsymbol{q_{\tilde{f}}-\sum_{j\succeq_{\tilde{f}}w^{\star}}\bar{x}_{\tilde{f},j}.}$
$$
q_{\tilde{f}}-\sum_{j\succeq_{\tilde{f}}w^{\star}}\bar{x}_{\tilde{f},j}=q_{\tilde{f}}-\sum_{j\succeq_{\tilde{f}}w^{\star}}\left(\sum_{l=1}^{k}\alpha_{l}x^{\mu^{l}}_{\tilde{f},j}\right)=
$$
$$
q_{\tilde{f}}-\sum_{l=1}^{k}\alpha_{l}\left(\sum_{j\succeq_{\tilde{f}}w^{\star}}x^{\mu^{l}}_{\tilde{f},j}\right).
$$
Since $ \mu^{t} \preceq_{\tilde{f}} \mu^{t-1}[\sigma^{\star}] $, we have that $$\sum_{j\succeq_{\tilde{f}}w}x^{\mu^{t}}_{\tilde{f},j}\leq \sum_{j\succeq_{\tilde{f}}w}x^{\mu[K'][\sigma^{\star}]}_{\tilde{f},j}$$ for all $w\in{W}$. 

In particular for  $w=w^{\star}$, and the fact that $w^{\star}\succ_{\tilde{f}}w^{2}$ with $w^{2}\in{\mu^{t-1}[\sigma^{\star}](\tilde{f})}$, we have that $$\sum_{j\succeq_{\tilde{f}}w^{\star}}x^{\mu^{t}}_{\tilde{f},j}\leq\sum_{j\succeq_{\tilde{f}}w^{\star}}x^{\mu^{t-1}[\sigma^{\star}]}_{\tilde{f},j} <q_{\tilde{f}}.$$
By (\ref{w en el medio}) we also have that for each $l=1,\ldots,k$
$$
\sum_{j\succeq_{\tilde{f}}w^{\star}}x^{\mu^{l}}_{\tilde{f},j}\leq q_{\tilde{f}}.
$$
Using the decreasing sequence of stable matchings of Theorem \ref{sucesion decreciente}, we have that $\alpha_{l}>0$ for each $l=1,\ldots,k$. Then, we have that 
$$
q_{\tilde{f}}-\sum_{j\succeq_{\tilde{f}}w^{\star}}\bar{x}_{\tilde{f},j}=q_{\tilde{f}}-\sum_{l=1}^{k}\alpha_{l}\left(\sum_{j\succeq_{\tilde{f}}w^{\star}}x^{\mu^{l}}_{\tilde{f},j}\right)>q_{\tilde{f}}-\left(\sum_{l=1}^{k}\alpha_{l}q_{\tilde{f}}\right)=0.
$$
\item[\textbf{Case 1.2:}]$\boldsymbol{1-\sum_{i\succeq_{w^{\star}}\tilde{f}}\bar{x}_{i,w^{\star}}.}$
$$
1-\sum_{i\succeq_{w^{\star}}\tilde{f}}\bar{x}_{i,w^{\star}}=1-\sum_{i\succeq_{w^{\star}}\tilde{f}}\left(\alpha_{1}x^{\mu^{1}}_{i,w^{\star}}+\sum_{l=2}^{k}\alpha_{l}x^{\mu^{l}}_{i,w^{\star}}\right)=
$$
$$\left[1-\left(\alpha_{1}\sum_{i\succeq_{w^{\star}}\tilde{f}}x^{\mu^{1}}_{i,w^{\star}}+\sum_{l=2}^{k}\alpha_{l}\sum_{i\succeq_{w^{\star}}\tilde{f}}x^{\mu^{l}}_{i,w^{\star}}\right)\right].
$$
By (\ref{w en el medio}), we have that $w^{\star}\not\in{\mu^{1}(\tilde{f})}$. Since $\mu^{1}\succ_{\tilde{f}}\mu^{t-1}$, we have that $\mu^{t-1}(w^{\star})=\tilde{f}\succ_{w^{\star}}\mu^{1}(w^{\star})$. Therefore, 
$$
\sum_{i\succeq_{w^{\star}}\tilde{f}}x^{\mu^{1}}_{i,w^{\star}}=0.
$$
Since $\alpha_l>0$ for each $l=1,\ldots,k$, then
$$1-\left(\alpha_{1}\sum_{i\succeq_{w^{\star}}\tilde{f}}x^{\mu^{1}}_{i,w^{\star}} +\sum_{l=2}^{t}\alpha_{l}\sum_{i\succeq_{w^{\star}}\tilde{f}}x^{\mu^{l}}_{i,w^{\star}}\right)=
$$ 
$$1-\sum_{l=2}^{t}\alpha_{l}\sum_{i\succeq_{w^{\star}}\tilde{f}}x^{\mu^{l}}_{i,w^{\star}}=1-\sum_{l=2}^{t}\alpha_{l}>0.
$$ 
\end{itemize}
Then from cases 1.1 and 1.2, we have that for the pair $(\tilde{f},w^{\star})$, 
$$
\left[q_{\tilde{f}}-\sum_{j\succeq_{\tilde{f}}w^{\star}}\bar{x}_{\tilde{f},j}\right]\cdot\left[1-\sum_{i\succeq_{w^{\star}}\tilde{f}}\bar{x}_{i,w^{\star}}\right]\neq 0.
$$
 That is, for the pair $(\tilde{f},w^{\star})$ condition (\ref{condition SS(P)}) fails.

\item[\textbf{Case 2:}]\textbf{If $\boldsymbol{\sigma\cap\sigma^{\star}=\emptyset}$ for all $\boldsymbol{\sigma\in{K'}}$.}

Notice that in this case, $\sigma^{\star}$ may or may not belong to $\Phi(\mu^1).$
Notice also that $\mu^{t-1}(f)\neq\mu^{t-1}[\sigma^{\star}](f)$ for all $f\in{\sigma^{\star}}$.
\begin{itemize}
\item[\textbf{Case 2.1:}] $\boldsymbol{\sigma\notin{\Phi(\mu^1)}}$.

We claim that there are $\bar{f}\in{\sigma^{\star}}$ and $\bar{w}\in W\setminus \{\mu^{1}(\bar{f)}) \cup \mu^{t-1}(\bar{f})\}$, such that for 
$\mu^{1}(\bar{f})=\{w_{1}^{\mu^{1}},\ldots,w_{q_{\bar{f}}}^{\mu^{1}}\}$,  $\mu^{t-1}(\bar{f})=\{w_{1}^{\mu^{t-1}},\ldots,w_{q_{\bar{f}}}^{\mu^{t-1}}\}$, we have that $w_{1}^{\mu^{1}}\succ_{\bar{f}}\bar{w}\succ_{\bar{f}}w_{q_{\bar{f}}}^{\mu^{t-1}}$.

If not, for all $f\in{\sigma^{\star}}$, we have that
$
\mu^{1}(f)\succ_{f}\mu^{t-1}[\sigma^{\star}](f)\succ_{f}w,
$
for each $w\not\in\{\mu^{1}(f) \cup \mu^{t-1}[\sigma^{\star}](f)\}$.\footnote{Here, $\mu^{t-1}[\sigma^{\star}](f)\succ_{f}w$ denotes that worker $w$ is less preferred for the firm $f$ than all workers matched to firm $f$ under the stable matching $\mu^{t-1}[\sigma^{\star}]$.}
Since $\sigma\cap\sigma^{\star}=\emptyset$, we have that $\mu^{1}(f)=\mu^{t-1}(f)$ for all $f\in{\sigma^{\star}}$. Then, let $\{w^{f}\}=\mu^{t-1}[\sigma^{\star}](f)\setminus \mu^{1}(f)$ for each $f\in{\sigma^{\star}}$. That is, $w^{f}$ is the most preferred worker in the reduced preference list $P^{\mu^{1}}(f)$ such that it does not belong to $\mu^{1}(f)$. This implies that $\sigma^{\star}\in\Phi(\mu^{1})$, and it is a contradiction since $\sigma^{\star}\not\in\Phi(\mu^{1})$.

Therefore, there are $\bar{f}\in{\sigma^{\star}}$ and $\bar{w}\in W\setminus \{\mu^{1}(\bar{f)}) \cup \mu^{t-1}(\bar{f})\}$, such that  
\begin{equation}\label{desigualdad 2 teorema FE}
\mu^{1}(\bar{f})=\mu^{t-1}(\bar{f})\succ_{\bar{f}}\bar{w}\succ_{\bar{f}}\mu^{t-1}[\sigma^{\star}](\bar{f})
\end{equation}
Since $\sigma^{\star}\in{\Phi(\mu^{t-1})}$, in order to obtain the reduced preference lists $P^{\mu^{t-1}}$, $\bar{f}$ should have eliminated $\bar{w}$ by means of the third step of the reduction procedure.
Then, we have that 
\begin{equation}\label{desigualdad 3 teorema FE}
\mu^{t}(\bar{w})\succeq_{\bar{w}}\mu^{t-1}[\sigma^{\star}](\bar{w})\succ_{\bar{w}}\mu^{t-1}(\bar{w})\succ_{\bar{w}}\bar{f}\succ_{\bar{w}}\mu^{1}(\bar{w}).
\end{equation}
Since $\bar{x}$ can be written as in Theorem \ref{sucesion decreciente}, ($\mu^{1},\mu^{t-1}\in{\mathcal{M}^{K'}_{\mu^{1}}}$ and  $\mu^{t-1}\succ_{F}\mu^{t}$), and using inequalities (\ref{desigualdad 2 teorema FE}) and (\ref{desigualdad 3 teorema FE}), we have that
$$
\sum_{j\succeq_{\bar{f}}\bar{w}}\bar{x}_{\bar{f},j}<q_{\bar{f}}\mbox{   and   }\sum_{i\succeq_{\bar{w}}\bar{f}}\bar{x}_{i,\bar{w}}<1.
$$
Then, 

$$
\left[q_{\tilde{f}}-\sum_{j\succeq_{\bar{f}}\bar{w}}\bar{x}_{\bar{f},j}\right]\cdot\left[1-\sum_{i\succeq_{\bar{w}}\bar{f}}\bar{x}_{i,\bar{w}}\right]> 0.
$$
That is, condition (\ref{condition SS(P)}) fails for the pair $(\bar{f},\bar{w})$.
\item[\textbf{Case 2.2:}] $\boldsymbol{\sigma\in{\Phi(\mu^1)}}$.

In this case, we have that $\mu^{t-1}[\sigma^{\star}]\in \mathcal{M}_{\mu^1}\setminus \mathcal{M}^{K'}_{\mu^1}.$ Then by Lemma \ref{ciclico es estable} item 2) and Lemma \ref{ciclico disjuntos} item 2), there is a cycle $\sigma'\in\Phi\left(  \mu^{t-1}[\sigma^{\star}]\right)  $ such that $$\mu^{t-1}\succ_F\mu^{t-1}[\sigma^{\star}]\succ_F\mu^{t-1}[\sigma^{\star}][\sigma']\succ_F\mu^{t}.$$
Notice that, $\sigma'$ may or may not belong to $\Phi(\mu^1)$. If $\sigma \notin \Phi(\mu^1)$, the arguments follows as in Case 2.1. If $\sigma' \in \Phi(\mu^1)$ we continue this process until, by finiteness of the set $\Phi(\mu^1)$, there is $\tilde{\sigma}\notin \Phi(\mu^1)$, and the arguments follows as in Case 2.1.
\end{itemize}

\end{itemize}
Therefore, from cases 1 and 2, we have that there is a connected set $\mathcal{M}_{\mu^{1}}$ such that  $\bar{x}\in{Conv\left( \mathcal{M}_{\mu^{1}}\right)}.$

$\boldsymbol{(\Longleftarrow)}$ Let $\bar{x}$ be
a convex combination of stable matchings from a connected set. That is, there are a stable matching $\mu$, and a list of sets $K_{1},\ldots,K_{k}\subseteq\Phi(\mu)$ with the corresponding cyclic matchings $\mu^{1},\ldots,\mu^{k}$, such that
$
\bar{x}=\sum_{l=1}^{k}\alpha_{l}x^{\mu^{l}}
$ with  $0\leq\alpha_{l}\leq1$ and
$\sum _{l=1}^{k}\alpha_{l}=1$.

Since $\bar{x}$ is a convex combination of stable matchings from $\mathcal{M}_{\mu}$, we have that $\mu \succeq_{F} \mu^{l}$ for each $l=1,\ldots,k.$ Then, we have that $\bar{x}$ is a stable fractional matching  for the matching market $(F,W,P^{\mu})$. Moreover, since $S(P^{\mu})\subseteq S(P)$, we have that $\bar{x}$ is also a stable fractional matching  for the matching market $(F,W,P,\boldsymbol{q})$.
By Lemma \ref{ss en reducido entonces ss en original}, we only need to prove that $\bar{x}$ is strongly stable in the reduced preference profile $P^{\mu}$.

If $\alpha_{1}=1$ we have that $\bar{x}=x^{\mu^{1}}$. Since $\mu^{1}$ is also a stable matching in the original preferences profile, then we have that $\bar{x}$ is strongly stable. Hence, we assume $0<\alpha_{l}<1$ for each $l=1,\ldots,k.$ 
Now, we prove that $\bar{x}$ fulfils condition (\ref{condition SS(P)}) for each pair $(f,w)\in A(P^{\mu})$.

Fix $f\in F$. 
Assume that firm $f$ does not fill its quota. Theorem \textcolor{blue}{RHT} assures that this firm is always assigned to the same set of workers in every stable matching. Then $\bar{x}_{f,j}=x_{f,j}^{\mu^{l}}$ for $l=1,\ldots,k$ and for all $j$ such that $(f,j)\in A(P^{\mu})$. Since $\mu^{l}$ is a stable matching for $l=1,\ldots,k$, it fulfils condition (\ref{condition SS(P)}) for each $(f,j)\in A(P^{\mu})$. Then, by Remark \ref{matching estable cumple condicion} we have that $\bar{x}$ also fulfil condition (\ref{condition SS(P)}) for each $(f,j)\in A(P^{\mu})$.

Assume now that $f$ does fill its quota. 
Let $\mathcal{K=}%
{\textstyle\bigcup\nolimits_{l=1}^{k}}
K_{l}$. Let $\mu(f)=\{w_{1},\ldots,w_{q_{f}}\}$ and $w_{i}\succ_{f}w_{i+1}$. We analyse two cases separately.
\begin{enumerate}
\item[\textbf{Case 1:}] \textbf{There is no $\boldsymbol{\sigma\in\mathcal{K}}$ such that $\boldsymbol{f\in\sigma.}$} 

By Definition \ref{defino matching ciclico generico},we have that $\mu[\mathcal{K}](f)=\mu(f)$. That is, for each $j\in{W}$, 
$\bar{x}_{f,j}=x_{f,j}^{\mu}$. 

Thus, if $w\preceq_{f}w_{q_{f}}$ we have that
$$
\sum_{j\succeq_{f}^{\mu}w}\bar{x}_{f,j}=\sum_{j\succeq_{f}^{\mu}w}x_{f,j}^{\mu}=q_{f}.
$$

If $w\succ_{f}w_{q_{f}}$, we have that  
$$
\sum_{j\succeq_{f}^{\mu}w}\bar{x}_{f,j}=\sum_{j\succeq_{f}^{\mu}w}x_{f,j}^{\mu}< q_{f}.
$$

Then, $\sum_{j\succ_{f}^{\mu}w}x_{f,j}^{\mu}< q_{f}.$ Moreover, by linear inequality (\ref{ecuacion 3}) for the stable matching $\mu$, we have that $\sum_{i\succeq_{w}^{\mu}f}x_{i,w}^{\mu}>0.$ Therefore, $\sum_{i\succeq_{w}^{\mu}f}x_{i,w}^{\mu}=1.$ Recall that $\mu$ is the firm oltimal stable matching in the reduced preference profile $P^{\mu}$, then Lemma \ref{fuertemente al medio} states that $\bar{x}\succeq^{\mu}_{W}x^{\mu}$, i.e., $$\sum_{i\succeq_{w}^{\mu}f}\bar{x}_{i,w}\geq\sum_{i\succeq_{w}^{\mu}f}x_{i,w}^{\mu}=1.$$ 

By the linear inequality (\ref{ecuacion 2}), we have that $\sum_{i\succeq_{w}^{\mu}f}\bar{x}_{i,w}=1$. Thus, for $(f,w)\in A(P^{\mu})$, we have that 
$$
\left[  q_{f}-\sum_{j\succeq_{f}^{\mu}w}\bar{x}_{f,j}\right]  \cdot\left[  1-\sum_{i\succeq_{w}^{\mu}f}\bar{x}_{i,w}\right]=0.
$$

\item[\textbf{Case 2:}] \textbf{There is $\boldsymbol{\sigma_{f}\in\mathcal{K}}$ such that $\boldsymbol{f\in\sigma_{f}.}$}

 By Lemma
\ref{ciclico disjuntos}, there is a unique cycle
$\sigma_{f}\in K_{l}$ such that $f\in\sigma_{f}.$ But $\sigma_{f}$
may be in more than one set $K_{l}.$ We denote $L_{f}=\{l:\sigma_{f}\in{K_{l}}\}$. Therefore,
$$
\bar{x}_{f,j}=\sum_{l=1}^{k}\alpha_{l}x_{f,j}^{\mu^{l}}=\sum_{l\in L_{f}}\alpha_{l}x_{f,j}^{\mu^{l}}+\sum_{l\not\in L_{f}}\alpha
_{l}x_{f,j}^{\mu^{l}}.
$$

Since $\sigma_{f}$ is unique, by Lemma \ref{ciclico disjuntos} and Lemma \ref{independencia del orden}, we have that $\mu[K_{l}](f)=\mu[\sigma _{f}](f)$ and
$x_{f,j}^{\mu^{l}}=x_{f,j}^{\mu[\sigma_{f}]}$ for those ${l\in L_{f}}$. Also, $\mu[K_{l}](f)=\mu(f)$ and $x_{f,j}^{\mu^{l}}=x_{f,j}^{\mu}$ for those ${l\not\in L_{f}}$. Hence,
$$
\sum_{l\in L_{f}}{\alpha_{l}x_{f,j}^{\mu^{l}}}+\sum_{l\not\in L_{f}}{\alpha_{l}x_{f,j}^{\mu^{l}}}=\sum_{l\in L_{f}}{\alpha_{l}x_{f,j}^{\mu[\sigma_{f}]}}+\sum_{l\not\in L_{f}}{\alpha_{l}x_{f,j}^{\mu}}=
$$
\begin{equation}\label{equ A}
x_{f,j}^{\mu[\sigma_{f}]}\left(\sum_{l\in L_{f}}\alpha_{l}\right)+x_{f,j}^{\mu}\left(\sum_{l\not\in L_{f}}\alpha_{l}\right).
\end{equation}

Since $\sum_{l\in L_{f}}\alpha_{l}  + \sum_{l\not\in L_{f}}\alpha_{l}=1$, then we define
$\bar{\alpha}=\sum_{l\in L_{f}}\alpha_{l}$.

Then (\ref{equ A}) is equal to
$\bar{\alpha}x_{f,j}^{\mu[\sigma_{f}]}+\left(  1-\bar{\alpha}\right)  x_{f,j}^{\mu}$. That is, $\bar{x}_{f,j}$ is the convex combination of $x_{f,j}^{\mu[\sigma_{f}]}$ and $x_{f,j}^{\mu}$. Since $f\in\sigma_{f}$, then  
$$
\vert supp(x_{f,\cdot}^{\mu[\sigma_{f}]})\vert = \vert supp(x_{f,\cdot}^{\mu})\vert =q_{f} \mbox{ and } \vert supp(x_{f,\cdot}^{\mu[\sigma_{f}]})\cap supp(x_{f,\cdot}^{\mu}) \vert = q_{f}-1. 
$$

Hence, there are two workers $w_{a}$ and $w_{q_{f}+1}$ such that $(f,w_{a})\in{supp(x^{\mu}) \setminus supp(x^{\mu[\sigma_{f}]})}$ and  $(f,w_{q_{f}+1})\in{supp(x^{\mu[\sigma_{f}]}) \setminus supp(x^{\mu})}$. Let $T=\{w_{s}:(f,w_{s})\in{supp(x^{\mu}) \cap supp(x^{\mu[\sigma_{f}]})}\}$. Then,

\begin{equation}\label{para el corolario}
\bar{x}_{f,j}=\left\{
\begin{array}{ll}
1 &~\mbox{if }j\in{T} \\
1-\bar{\alpha} &~\mbox{if } j=w_{a}\\
\bar{\alpha} &~\mbox{if } j=w_{q_{f}+1}.
\end{array}
\right.
\end{equation}

Now, we prove that $\bar{x}$ fulfils condition (\ref{condition SS(P)}) in $P^{\mu}$ for each $w$: 
\begin{enumerate}
\item[i)]
If $w\preceq_{f}w_{q_{f}+1}$, then
$$\sum_{j\succeq_{f}^{\mu}w}\bar{x}_{f,j}=\sum_{s\in{T}}\bar{x}_{f,w_{s}}+\bar{x}_{f,w_{a}}+\bar{x}_{f,w_{q_{f}+1}}=
(q_{f}-1)+(1-\bar{\alpha})+\bar{\alpha}=q_{f}.$$
Then, 
$$
 q_{f}-\sum _{j\succeq_{f}^{\mu}w}\bar{x}_{f,j}=0,
$$
hence,
$$
\left[  q_{f}-\sum _{j\succeq_{f}^{\mu}w}\bar{x}_{f,j}\right]  \cdot\left[  1-\sum_{i\succeq_{w}^{\mu}f}\bar{x}_{i,w}\right]=0.
$$
\item[ii)]
If  $w\succ^{\mu}_{f}w_{q_{f}}$, then
$$
\sum _{j\succeq_{f}^{\mu}w}\bar{x}_{f,j}<q_{f}.
$$
Since 
$$
q_{f}>\sum _{j\succeq_{f}^{\mu}w}\bar{x}_{f,j}=(1-\bar{\alpha})\sum _{j\succeq_{f}^{\mu}w}x^{\mu}_{f,j}+\bar{\alpha}\sum _{j\succeq_{f}^{\mu}w}x^{\mu[\sigma_{f}]}_{f,j},
$$
then, 
$$
\sum _{j\succeq_{f}^{\mu}w}x^{\mu}_{f,j}<q_{f}\mbox{ and }\sum _{j\succeq_{f}^{\mu}w}x^{\mu[\sigma_{f}]}_{f,j}<q_{f}.
$$
But $\mu$ and $\mu[\sigma]$ are stable matchings, and these stable matchings fulfil condition (\ref{condition SS(P)}). So we have that 
$$
\sum_{i\succeq_{w}^{\mu}f}x^{\mu}_{i,w}=1\mbox{ and } \sum_{i\succeq_{w}^{\mu}f}x^{\mu[\sigma_{f}]}_{i,w}=1,
$$
in which case we can assure that 
$$
\sum_{i\succeq_{w}^{\mu}f}\bar{x}_{i,w}=(1-\bar{\alpha})\sum_{i\succeq_{w}^{\mu}f}x^{\mu}_{i,w}+\bar{\alpha}\sum_{i\succeq_{w}^{\mu}f}x^{\mu[\sigma_{f}]}_{i,w}=1.
$$
That is, 
$$
\left[  q_{f}-\sum _{j\succeq_{f}^{\mu}w}\bar{x}_{f,j}\right]  \cdot\left[  1-\sum_{i\succeq_{w}^{\mu}f}\bar{x}_{i,w}\right]=0.
$$
\item[iii)]
If $w=w_{q_{f}}$. Recall that $\mu(f)=\{w_{1},\ldots,w_{q_{f}}\}$, then $\mu(w)=f$. Also, we have that , $\mu\succ_{F}^{\mu}\mu^{l}$ and $\mu^{l}\succeq_{W}^{\mu}\mu$ for all $l=1,\ldots,k$. In particular, $\mu^{l}(w)\succeq_{w}^{\mu} \mu(w)=f$ for all $l=1,\ldots,k$. This implies that 
$$
\sum_{i\succeq_{w}^{\mu}f}x_{i,w}^{\mu^{l}}=1
$$
for all $l=1,\ldots,k.$ 
Hence
$$
\sum_{i\succeq_{w}^{\mu}f}\bar{x}_{i,w}=\sum_{i\succeq_{w}^{\mu}f}\sum_{l=1}^{t}\alpha_{l}x_{i,w}^{\mu^{l}}=\sum_{l=1}^{t}\alpha_{l}\sum_{i\succeq_{w}^{\mu}f}x_{i,w}^{\mu^{l}}=\sum_{l=1}^{t}\alpha_{l}=1.
$$

Then, 
$$
\left[  q_{f}-\sum _{j\succeq_{f}^{\mu}w}\bar{x}_{f,j}\right]  \cdot\left[  1-\sum_{i\succeq_{w}^{\mu}f}\bar{x}_{i,w}\right]=0.
$$

\end{enumerate} 
\end{enumerate}
From cases 1 and 2, we have that for the pair $(f,w)$,  
$$
\left[  q_{f}-\sum _{j\succeq_{f}^{\mu}w}\bar{x}_{f,j}\right]  \cdot\left[  1-\sum_{i\succeq_{w}^{\mu}f}\bar{x}_{i,w}\right]=0.
$$

Therefore, $\bar{x}$ is a strongly stable fractional matching in the reduced preference profile $P^{\mu}$, and by Lemma \ref{ss en reducido entonces ss en original}, $\bar{x}$ is a strongly stable fractional matching in the original preference profile $P$.
\end{proof}

Once we have characterized all strongly stable fractional matchings for the many-to-one matching market $(F,W,P,\boldsymbol{q})$, we can characterize the set of all strongly stable fractional matchings. Recall that, $SSF(P)$ denotes the set of all strongly stable fractional matchings at the preference profile $P$.
\begin{corollary}\label{corolario 2}
Let $(F,W,P,\boldsymbol{q})$ be a many-to-one matching market. Then,
$$
SSF(P)=\bigcup_{\mu\in{S(P)}}Conv\{\mathcal{M}_{\mu}\}.
$$
\end{corollary}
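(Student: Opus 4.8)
The plan is to read off the set equality directly from Theorem \ref{caraterizacion}, establishing the two inclusions separately. The only point that goes beyond a bare citation is to reconcile the notion of an arbitrary \emph{connected set} (which Definition \ref{defino conjunto conectado} allows to be any $\mathcal{M}_{\mu}^{K'}$ with $K'\subseteq\Phi(\mu)$) with the \emph{maximal} connected sets $\mathcal{M}_{\mu}=\mathcal{M}_{\mu}^{\Phi(\mu)}$ that appear in the statement. The key observation is that for any $\mu\in S(P)$ and any $K'\subseteq\Phi(\mu)$ one has $\mathcal{M}_{\mu}^{K'}\subseteq\mathcal{M}_{\mu}^{\Phi(\mu)}=\mathcal{M}_{\mu}$, so that $Conv\{\mathcal{M}_{\mu}^{K'}\}\subseteq Conv\{\mathcal{M}_{\mu}\}$; and conversely each $\mathcal{M}_{\mu}$ is itself a connected set. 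Hence ``belonging to the convex hull of some connected set'' is equivalent to ``belonging to $Conv\{\mathcal{M}_{\mu}\}$ for some $\mu\in S(P)$''.

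For the inclusion $SSF(P)\subseteq\bigcup_{\mu\in S(P)}Conv\{\mathcal{M}_{\mu}\}$, I would take $\bar{x}\in SSF(P)$ and apply the forward direction of Theorem \ref{caraterizacion} to obtain a collection of connected stable matchings $\{\mu^{1},\ldots,\mu^{k}\}$ and weights $\alpha_{1},\ldots,\alpha_{k}$ with $\bar{x}=\sum_{l=1}^{k}\alpha_{l}x^{\mu^{l}}$. Writing this connected set in the canonical form $\mathcal{M}_{\mu}^{K'}$ for the appropriate generating matching $\mu\in S(P)$ and $K'\subseteq\Phi(\mu)$, the nesting $\mathcal{M}_{\mu}^{K'}\subseteq\mathcal{M}_{\mu}$ noted above gives $\bar{x}\in Conv\{\mathcal{M}_{\mu}^{K'}\}\subseteq Conv\{\mathcal{M}_{\mu}\}$, so $\bar{x}$ lies in the right-hand union.

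For the reverse inclusion $\bigcup_{\mu\in S(P)}Conv\{\mathcal{M}_{\mu}\}\subseteq SSF(P)$, I would take $\bar{x}\in Conv\{\mathcal{M}_{\mu}\}$ for some $\mu\in S(P)$. Since $\mathcal{M}_{\mu}=\mathcal{M}_{\mu}^{\Phi(\mu)}$ is a connected set, $\bar{x}$ is a convex combination of stable matchings drawn from a single connected set, and the backward direction of Theorem \ref{caraterizacion} then yields that $\bar{x}$ is strongly stable, i.e. $\bar{x}\in SSF(P)$. Combining the two inclusions gives the asserted equality.

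The main ``obstacle'' here is genuinely mild and purely organizational: one must be careful to identify the connected set produced by Theorem \ref{caraterizacion} with its correct generating matching $\mu$ so that it is seen to sit inside $\mathcal{M}_{\mu}$, and to recognize that every $\mathcal{M}_{\mu}$ is itself connected so that the converse citation applies verbatim. There is no substantive new argument beyond unwinding Definition \ref{defino conjunto conectado}; all of the real work has already been carried out in Theorem \ref{caraterizacion}.
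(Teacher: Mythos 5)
Your proposal is correct and matches the paper's intent: the paper gives no separate proof of Corollary \ref{corolario 2}, treating it as an immediate consequence of Theorem \ref{caraterizacion}, whose forward direction already concludes $\bar{x}\in Conv\{\mathcal{M}_{\mu^{1}}\}$ for a stable matching $\mu^{1}$ and whose backward direction covers the reverse inclusion. Your only added content, the nesting $\mathcal{M}_{\mu}^{K'}\subseteq\mathcal{M}_{\mu}$ and the observation that each $\mathcal{M}_{\mu}$ is itself connected, is exactly the routine bookkeeping the paper leaves implicit.
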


The following corollary extends Corollary 21 from Roth \textit{et al.} \cite{roth1993stable}. It gives an upper bound to the number of worker matched to each firm, and the number of firms matched to each worker.

\begin{corollary}\label{corolario 3}
Let $(F,W,P,\boldsymbol{q})$ be a many-to-one matching market. Each strongly stable fractional matching fulfils the following two conditions:
\begin{enumerate}
\item Each worker has a positive probability with at most two distinct firms.
\item Each firm, all but possibly one position are assigned deterministically. For the one position that is assigned by a lottery, two workers have a positive probability of been employed.
\end{enumerate}
\end{corollary}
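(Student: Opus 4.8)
The plan is to deduce both conditions directly from the explicit representation that is already in hand. By Theorem \ref{caraterizacion}, any strongly stable fractional matching $\bar{x}$ can be written as $\bar{x}=\sum_{l=1}^{k}\alpha_{l}x^{\mu^{l}}$, where each $\mu^{l}=\mu[K_{l}]$ is a cyclic matching of a common stable matching $\mu$ for some $K_{l}\subseteq\Phi(\mu)$, with $0<\alpha_{l}\leq 1$ and $\sum_{l}\alpha_{l}=1$. I would fix this representation once, set $\mathcal{K}=\bigcup_{l}K_{l}$, and then argue separately from the firm side and from the worker side, exploiting in both cases that the cycles of $\Phi(\mu)$ are pairwise disjoint (Lemma \ref{ciclico disjuntos}, item 1).

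For the firm side (condition 2) I would essentially read off the computation already performed inside the proof of Theorem \ref{caraterizacion}. Fix a firm $f$. If $f$ does not fill its quota, the Rural Hospital Theorem forces $f$ to be matched to the same set of workers under every $\mu^{l}$, so $\bar{x}_{f,\cdot}$ is integral and every position is deterministic. If $f$ fills its quota but lies in no cycle of $\mathcal{K}$, then $\mu[K_{l}](f)=\mu(f)$ for all $l$ and the same conclusion holds. Otherwise disjointness gives a unique cycle $\sigma_{f}\in\mathcal{K}$ with $f\in\sigma_{f}$, and formula (\ref{para el corolario}) shows that $\bar{x}_{f,\cdot}$ assigns probability $1$ to the $q_{f}-1$ workers of $T$, while the one remaining position is split, with probabilities $1-\bar{\alpha}$ and $\bar{\alpha}$, between exactly the two workers $w_{a}$ and $w_{q_{f}+1}$. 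This is precisely condition 2.

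For the worker side (condition 1) I would track which workers are actually moved by the cycles. In a cycle $\sigma=\{e_{1},\ldots,e_{r}\}$ the moved worker $w_{e_{d}}$ is matched to $e_{d+1}$ under $\mu$ and to $e_{d}$ under $\mu[\sigma]$ (indices modulo $r$), while every other worker keeps its $\mu$-partner. The key observation I would establish is that a worker can be a moved worker of at most one cycle: under $\mu$ each worker has a single partner, and two distinct cycles use disjoint firms, so a worker moved by both $\sigma$ and $\sigma'$ would have to be $\mu$-matched inside $\sigma\cap\sigma'=\emptyset$. Consequently, a worker $w$ not moved by any cycle of $\mathcal{K}$ has $\mu^{l}(w)$ constant in $l$, so it is matched deterministically to one firm; and a worker $w=w_{e_{d}}$ moved by its unique cycle $\sigma$ is matched, across the $\mu^{l}$, only to $e_{d}$ (when $\sigma\in K_{l}$) or to $e_{d+1}$ (when $\sigma\notin K_{l}$), hence with positive probability to at most the two firms $e_{d}$ and $e_{d+1}$, giving condition 1.

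The hard part will be the worker-side bookkeeping: I must rule out that some third cycle, or some firm other than $e_{d}$ and $e_{d+1}$, could alter $w$'s partner inside a composite $\mu[K_{l}]$. The clean way I would close this is to note that the only workers whose partner a cycle changes are its own moved workers, so \emph{being affected by a cycle is equivalent to being one of its moved workers}; combined with the uniqueness argued above and the order-independence of applying disjoint cycles (Lemma \ref{independencia del orden}), this pins $w$'s partner down to $\{e_{d},e_{d+1}\}$ in every $\mu^{l}$, and hence in $\bar{x}$.
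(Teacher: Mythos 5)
Your proposal is correct and takes essentially the same route as the paper, which proves the corollary simply by pointing to formula (\ref{para el corolario}) in the proof of Theorem \ref{caraterizacion}: condition 2 is read off that formula exactly as you do, and condition 1 is the worker-side bookkeeping that the paper leaves implicit but that your disjoint-cycles argument (each worker is a moved worker of at most one cycle, so its partner ranges over at most $\{e_{d},e_{d+1}\}$) fills in correctly.
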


Notice that, from (\ref{para el corolario}) in the proof of Theorem \ref{caraterizacion}, the previous corollary follows straightforward. Another extension is due to  Schlegel \cite{schlegel2018note} for a school choice matching market with strict preferences (similar setting than ours). Our characterization gives an alternative proof for these two similar results, for the school choice set-up due to Schlegel \cite{schlegel2018note} is straightforward, and for the marriage market due to Roth \textit{et al.} \cite{roth1993stable}, it's necessary only to set all quotas of all firms equal to one. 

\section*{Conclusions.}
In this paper we present a strong stability condition for a many-to-one matching market where firms' preference are $q$-responsive. Further, we prove that a strongly stable fractional matching can be represented as a convex combination of stable matchings that fulfil a decreasing order in the eyes of all firms. Although it was already known the ``almost" integrability of a strongly stable fractional matchings, there may be more ``almost" integral stable fractional matchings that are not strongly stable (the stable fractional matching $x^1$ in Example \ref{ejemplo b-b} illustrates this). Our characterization of strongly stable fractional matching allow us to describe precisely which are the agents matched.  Also, we characterize the set of all strongly stable fractional matchings. We think that our results gives a complete description of the strongly stable fractional matchings.

\section*{Acknowledgements.}
We would like to thank Agustin Bonifacio, Jordi Mass\'o and the Game Theory Group
of IMASL for the helpful discussions and  detailed comments.
Also, we are grateful with ``MISS CLAIR" for the assistance
with the English writing. Our work is partially supported by \textit{Universidad Nacional de San Luis}, through grant 319502, and by the
\textit{Consejo Nacional de Investigaciones Cient\'{\i}ficas y T\'{e}cnicas}
(CONICET), through grant PIP 112-200801-00655.

\appendix
\section{Appendix}

\subsection*{The reduction procedure:}

Let $\left( F,W,P,\boldsymbol{q}\right)$ be a many-to-one matching market. Let $\mu_{F}$ be the optimal stable matching for all firms, and $\mu_{W}$ be the optimal stable matching for all workers. 

\noindent\textbf{Step 1: }Remove all $w$ who are more preferred than the most preferred worker matched under
$\mu _{F}(f)  $ from $f$'s list of acceptable workers. Remove all $f$ who are more preferred than $\mu_{W}(w)$ from $w$'s list of acceptable firms.

Therefore, the most preferred worker matched in $\mu_{F}(f)$ will be the first entry in
$f$'s reduced list, and $\mu_{W}(w)  $ will be
the first entry in $w$'s reduced list.

\noindent\textbf{Step 2: }Remove all $f$ who are less preferred than
$\mu _{F}\left(  w\right)  $ from $w$'s list of acceptable
firms. Remove all $w$ who are less preferred than the least worker matched under $\mu_{W}\left(
f\right)  $ from $f$'s list of acceptable workers.

Thus, $\mu_{F}\left(  w\right)  $ will be the last entry in
$w$'s reduced list and the least preferred  worker in $\mu_{W}\left(  f\right)  $ will be
the last entry in $f$'s reduced list.

\noindent\textbf{Step 3: }After steps 1 and 2, if $f$ is not acceptable for $w$ (i.e., if
$f$ is not on $w$'s preference list as now modified), then
remove $w$ from $f$'s list of acceptable workers, and similarly, remove from  $w$'s list of acceptable firms, any firm $f$ to whom $w$ is no longer acceptable.

Hence, $f$ will be acceptable for $w$ if and only if $w$ is
acceptable for $f$ after Step 3.

For the matching market $(F,W,P^{\mu},\boldsymbol{q})$, the stable matching $\mu $ is the  $%
F-optimal$ stable matching, that is the stable matching that all firm prefer in the matching market $(F,W,P^{\mu},\boldsymbol{q})$.

\subsection*{Lemmas and Proofs.}

\begin{proof}[Proof of Lemma \ref{mux stable}]
Let $(F,W,P,\boldsymbol{q})$ be a many-to-one matching market. Let $\bar{x}$ be a strongly stable fractional matching. First, we will prove that $\mu_{\bar{x}}$ is a matching.
Assume that all  positive entries of $\bar{x}$ are equal to $1$, then we have by Definition \ref{defino mux} that $x^{\mu_{\bar{x}}}=\bar{x}$. Since  $\bar{x}$ is a strongly stable fractional matching, by Lemma \ref{Ba-Ba stable PC} we have that $\mu_{\bar{x}}$ is a stable matching.

Assume now that not all  positive entries of $\bar{x}$ are equal to $1$. 
We will prove that $\mu_{\bar{x}}$ is a matching. Assume that is not a matching. That is, there is a worker $w$  and two different firms $f$ and $f'$, such that $w\in{\mu_{\bar{x}}(f)}$ and $w\in{\mu_{\bar{x}}(f')}$. Since the preferences of the worker $w$ are strict, without loss of generality, we can assume that $f\succ_{w}f'$.
We will show that $\sum_{i\succeq_{w}f}\bar{x}_{i,w}=1$, for this we analyse two cases:
\begin{enumerate}
\item[\textbf{Case 1:}] $\boldsymbol{\vert C_{f}^{0}(\bar{x}) \vert \leq q_{f}}$.

 We have that if $\sum_{j\succeq_{f}w}\bar{x}_{f,j}<q_{f}$, since $\bar{x}$ is a stable fractional matching. Then condition (\ref{condition SS(P)}) implies that
$$
\sum_{i\succeq_{w}f}\bar{x}_{i,w}=1.
$$
Hence, $\sum_{i\prec_{w}f}\bar{x}_{i,w}=0$, and $\bar{x}_{f',w}=0$, which contradicts the assumption of $\bar{x}_{f',w}>0$.

If $\sum_{j\succeq_{f}w}\bar{x}_{f,j}=q_{f}$, then $w\in{C_{f}^{q_{f}}}$ and $\bar{x}_{f,w}=1$. This implies that
$$
\sum_{i\succeq_{w}f}\bar{x}_{i,w}=1.
$$
\item[\textbf{Case 2:}]$\boldsymbol{\vert C_{f}^{0}(\bar{x})\vert>q_{f}.}$

Notice that $C^{q_{f}}_{f}(\bar{x})\subset C^{0}_{f}(\bar{x})$. Since $w\in{\mu_{\bar{x}}(f)}$, then $w\in{C^{q_{f}}_{f}(\bar{x})}$, and we have that
$$
\sum_{j\succeq_{f}w}\bar{x}_{f,j}\leq\sum_{j\in{C^{q_{f}}_{f}(\bar{x})}}\bar{x}_{f,j}<\sum_{j\in{C^{0}_{f}(\bar{x})}}\bar{x}_{f,j}\leq q_{f}
$$
Hence, $\sum_{j\succeq_{f}w}\bar{x}_{f,j}< q_{f}$. Since $\bar{x}$ is a strongly stable fractional matching, condition (\ref{condition SS(P)}) implies that
$$
\sum_{i\succeq_{w}f}\bar{x}_{i,w}=1.
$$
\end{enumerate}
From cases 1 and 2, we have that $\sum_{i\succeq_{w}f}\bar{x}_{i,w}=1$, then $\sum_{i\prec_{w}f}\bar{x}_{i,w}=0$, and also $\bar{x}_{f',w}=0$ since $f\succ_{w}f'$, which contradicts the assumption of $\bar{x}_{f',w}>0$. Therefore $\mu_{\bar{x}}$ is a matching.

Now, we will prove that  $\mu_{\bar{x}}$ is a stable matching. Let $w\in{\mu_{\bar{x}}(f)}$, then $w\in{C_{f}^{q_{f}}(\bar{x})\subseteq C_{f}^{0}(\bar{x})}$. Hence $w\succ_{f}f$ and $f\succ_{w}w$. Then $\mu_{\bar{x}}$ in an individually rational matching.

Assume that there is a blocking pair $(\bar{f},\bar{w})$ of $\mu_{\bar{x}}$. This means that we have the following three statements:
\begin{enumerate}
\item[a)]$\bar{w}\not\in{\mu_{\bar{x}}(\bar{f})}$.
\item[b)]There is $w'\in{\mu_{\bar{x}}(\bar{f})}$ such that: either $\bar{w}\succ_{\bar{f}}w'$ if $\vert\mu_{\bar{x}}(\bar{f})\vert=q_{\bar{f}}$,  or $\bar{w}\succ_{\bar{f}}\bar{f}$ if $\vert\mu_{\bar{x}}(\bar{f})\vert<q_{\bar{f}}$.
\item[c)]$\bar{f}\succ_{\bar{w}}\mu_{\bar{x}}(\bar{w})$.
\end{enumerate}
Now, we will show that $\bar{x}_{\bar{f},\bar{w}}=0$:
\begin{enumerate}
\item[i)]If $\vert C^{q_{f}}_{f}(\bar{x}) \vert < q_{f}$, then $C^{q_{f}}_{f}(\bar{x})=C^{0}_{f}(\bar{x})$. Hence, since  $\mu_{\bar{x}}(\bar{f})=C^{0}_{f}(\bar{x})$ and $\bar{w}\not\in{\mu_{\bar{x}}(\bar{f})}$, we have that $\bar{x}_{\bar{f},\bar{w}}=0$.

\item[ii)]If $\vert C^{q_{\bar{f}}}_{\bar{f}}(\bar{x}) \vert=q_{\bar{f}}$. Let $w^{\star}\in  C^{q_{\bar{f}}}_{\bar{f}}(\bar{x}) \setminus  C^{q_{\bar{f}}-1}_{\bar{f}}(\bar{x}).$ That is, $w^{\star}$ is the least preferred worker employed with firm $\bar{f}$ under the matching  $\mu_{\bar{x}}.$ Then, by item b), we have that
\begin{equation}\label{eq prueba mux}
 \sum_{j\succeq_{\bar{f}}\bar{w}}\bar{x}_{\bar{f},j}<\sum_{j\succeq_{\bar{f}}w^{\star}}\bar{x}_{\bar{f},j}\leq q_{\bar{f}}.
\end{equation}

 By items a) and b), we have that $\bar{w}\succ_{\bar{f}}w^{\star},$ and by Definition \ref{defino mux} we can assure that $\bar{x}_{\bar{f},\bar{w}}=0$. If not, we have that $\bar{w}\in{\mu_{\bar{x}}(\bar{f})},$ a contradiction.
\end{enumerate}

By (\ref{eq prueba mux}) and the fact that $\bar{x}$ by hypothesis is strongly stable, we have that 
$$
1-\sum_{i\succeq_{\bar{w}}\bar{f}}\bar{x}_{i,\bar{w}}=0.
$$
Then, 
$$
1=\sum_{i\succeq_{\bar{w}}\bar{f}}\bar{x}_{i,\bar{w}}=\bar{x}_{\bar{f},\bar{w}}+\sum_{i\succ_{\bar{w}}\bar{f}}\bar{x}_{i,\bar{w}}=0+\sum_{i\succ_{\bar{w}}\bar{f}}\bar{x}_{i,\bar{w}}.
$$
Hence, $\sum_{i\prec_{\bar{w}}\bar{f}}\bar{x}_{i,\bar{w}}=0$, but this is a contradiction since from Definition \ref{defino mux}, we have that $\bar{x}_{\mu_{\bar{x}}(\bar{w}),\bar{w}}>0$ and by item c) we have $\bar{f}\succ_{\bar{w}}\mu_{\bar{x}}(\bar{w})$. Therefore, $\mu_{\bar{x}}$ is a stable matching. \end{proof}

\begin{proof}[Proof of Lemma \ref{para sus decre}]
Let $\bar{x}$ be a strongly stable fractional matching. Then, by Lemma (\ref{mux stable}) we have that $\mu_{\bar{x}}$ is a stable matching. By Definition \ref{defino mux} we have that $supp(x^{\mu_{\bar{x}}})\subset supp(\bar{x})$, and by the fact that $\bar{x}\neq x^{\mu_{\bar{x}}}$ we have that
$$
\bar{x}=\alpha x^{\mu_{\bar{x}}} +(1-\alpha) y.
$$
We need to prove that 
$$y=\frac{\bar{x}-\alpha x^{\mu_{\bar{x}}}}{1-\alpha}$$
is a strongly stable fractional matching. That is, $y$ is a solution of \textit{CP} and fulfils condition (\ref{condition SS(P)}).

From $\bar{x}\not=x^{\mu_{\bar{x}}}$, we have that $\alpha>0$. From definition of $\alpha$, we have that $\alpha<1$.

Assume that $\alpha=\bar{x}_{\bar{f},\bar{w}}$, with $\bar{w}\in{C^{q_{\bar{f}}}_{\bar{f}}(\bar{x})}$.
\begin{itemize}
\item \textbf{Inequality (1) of \textit{CP}.} 
Following from the definition of $y$ and the definition of $\alpha$, we have that:

If $\vert C_{f}^{0}(\bar{x})\vert \geq q_{f}$, then
$$\sum_{j\in W}\bar{x}_{f,j}-\alpha\sum_{j\in W}x_{f,j}^{\mu_{\bar{x}}}=\sum_{i\in{W}}\bar{x}_{f,j}-\alpha q_{f} \leq q_{f}-\alpha q_{f}.$$
 
Therefore,
$$
\sum_{j\in W}y_{f,j}=\frac{1}{1-\alpha}\left[\sum_{j\in W}\bar{x}_{f,j}-\alpha\sum_{j\in W}x_{f,j}^{\mu_{\bar{x}}}\right]\leq q_{f}.
$$ 
If $\vert C_{f}^{0}(\bar{x})\vert =r <q_{f}$, then $\sum_{j\in W}\bar{x}_{f,j}-\alpha\sum_{j\in W}x_{f,j}^{\mu_{\bar{x}}}=\sum_{i\in{W}}\bar{x}_{f,j}-\alpha r \leq r-\alpha r=r(1-\alpha)<q_{f}(1-\alpha)$. Then,
$$
\sum_{j\in W}y_{f,j}=\frac{1}{1-\alpha}\left[\sum_{j\in W}\bar{x}_{f,j}-\alpha\sum_{j\in W}x_{f,j}^{\mu_{\bar{x}}}\right]\leq q_{f}.
$$
That is, $y$ satisfy linear inequality (1). 
\item \textbf{Inequality (2) of \textit{CP}.} 
A similar argument that is used for inequality (1), proves that $y$ satisfy linear inequality (2).
\item \textbf{Inequality (3) of \textit{CP}.}
If $(f,w)\in{supp(\bar{x})}$, by definition of $\mu_{\bar{x}}$, we shall consider two cases:
\begin{itemize}
	\item If $(f,w)\not\in{supp(x^{\mu_{\bar{x}}})}$, that is, $x^{\mu_{\bar{x}}}_{f,w}=0$. Then, $y_{f,w}=\frac{\bar{x}_{f,w}}{1-\alpha}=0$.
	\item If $(f,w)\in{supp(x^{\mu_{\bar{x}}})}$, that is, $x^{\mu_{\bar{x}}}_{f,w}=1$ and also $x^{\mu_{\bar{x}}}_{\bar{f},\bar{w}}=1$.
\end{itemize}
Then, $y_{f,w}=\frac{\bar{x}_{f,w}-\alpha x^{\mu_{\bar{x}}}_{f,w}}{1-\alpha}=\frac{\bar{x}_{f,w}-\alpha}{1-\alpha}\geq 0$. Then, for $(f,w)\in{A}$, $y_{f,w}\geq0$, that is $y$ satisfies linear inequality (3).

\item \textbf{Inequality (4) of \textit{CP}.}
Inequality (4) it is easily satisfied form definition of $y$.

\item \textbf{Condition (\ref{condition SS(P)}).} 
By hypothesis we have that $\bar{x}$ is a strongly stable fractional matching, then $\bar{x}$ fulfils condition (\ref{condition SS(P)}). Hence, 
$$\left[q_{f}-\sum_{j\succeq_{f}w}\bar{x}_{f,j}\right]\cdot\left[1-\sum_{i\succeq_{w}f}\bar{x}_{i,w}\right]=0,$$
for each $(f,w)\in A(P)$. 
Since $\bar{x}=\alpha x^{\mu_{\bar{x}}}+(1-\alpha)y$, with $0<\alpha<1$, then for each $(f,w)\in A(P)$ we have that
$$\left[q_{f}-\sum_{j\succeq_{f}w}\left(\alpha x^{\mu_{\bar{x}}}_{f,j}+(1-\alpha)y_{f,j}\right)\right]\cdot\left[1-\sum_{i\succeq_{w}f}\left(\alpha x^{\mu_{\bar{x}}}_{i,w}+(1-\alpha)y_{i,w}\right)\right]=0$$
$$\left[q_{f}-\alpha\sum_{j\succeq_{f}w} x^{\mu_{\bar{x}}}_{f,j}-(1-\alpha)\sum_{j\succeq_{f}w}y_{f,j}\right]\cdot\left[1-\alpha \sum_{i\succeq_{w}f}x^{\mu_{\bar{x}}}_{i,w}-(1-\alpha)\sum_{i\succeq_{w}f}y_{i,w}\right]=0$$
$$\left[ \alpha \left( q_{f}-\sum_{j\succeq_{f}w} x^{\mu_{\bar{x}}}_{f,j} \right)+(1-\alpha) \left( q_{f}-\sum_{j\succeq_{f}w}y_{f,j} \right) \right] \cdot$$
\begin{equation}\label{ecuacion para lema final}
\left[ \alpha \left(1- \sum_{i\succeq_{w}f} x^{\mu_{\bar{x}}}_{i,w} \right)+(1-\alpha) \left( 1-\sum_{i\succeq_{w}f} y_{i,w} \right) \right]=0.
\end{equation}

Since $x^{\mu_{\bar{x}}}$ and $y$ fulfil inequality (\ref{ecuacion 1}) and (\ref{ecuacion 2}), then
$$q_{f}-\sum_{j\succeq_{f}w}x^{\mu_{\bar{x}}}_{f,j}\geq 0,~~q_{f}-\sum_{j\succeq_{f}w}y_{f,j}\geq 0,$$
\begin{equation}\label{ecuacion para lema final 2}
sum_{i\succeq_{w}f}x^{\mu_{\bar{x}}}_{f,j}\geq 0\mbox{ and }1-\sum_{j\succeq_{w}f}y_{f,j}\geq 0.
\end{equation}

Then, by (\ref{ecuacion para lema final}) and (\ref{ecuacion para lema final 2}), we have that either
$$q_{f}-\sum_{j\succeq_{f}w}x^{\mu_{\bar{x}}}_{f,j}=0\mbox{ and } q_{f}-\sum_{j\succeq_{f}w}y_{f,j}=0$$
or
$$1-\sum_{i\succeq_{w}f}x^{\mu_{\bar{x}}}_{i,w}=0\mbox{ and } 1-\sum_{i\succeq_{w}f}y_{i,w}=0.$$

Since, by Lemma \ref{mux stable} $\mu_{\bar{x}}$ is a stable matching, and by Remark \ref{matching estable cumple condicion} $x^{\mu_{\bar{x}}}$ fulfils condition (\ref{condition SS(P)}) for each $(f,w)\in A(P)$. Therefore, $y$ fulfils condition (\ref{condition SS(P)}) for each $(f,w)\in A(P)$.

\end{itemize}
 Since $supp(x^{\mu_{\bar{x}}})\subseteq supp(\bar{x})$, we have that $supp(y)\subseteq supp(\bar{x})$.  Moreover, since $y_{\bar{f},\bar{w}}=0$, and $x^{\mu_{\bar{x}}}_{\bar{f},\bar{w}}=1$, then $supp(y)\subset supp(\bar{x})$. \end{proof}\bigskip

\begin{proof}[Proof of Lemma \ref{independencia del orden}]
Let $P^{\mu}$ be a profile of reduced lists for $\left( F,W,P\right)$, and let $\sigma$ and $\sigma'$ be two different cycles in $P^{\mu}.$ 
\begin{enumerate}[1.]
\item By Lemma \ref{ciclico disjuntos}, we have that $\sigma\cap\sigma'=\emptyset$. Then, we can assume that $\sigma'=\{e_{1},\ldots,e_{r}\}$ and $\sigma=\{e_{r+1},\ldots,e_{r+r'}\}$. By Definition \ref{defino matching ciclico generico}, 
$$\mu[\sigma'](f)=\left\{
\begin{array}{lll}
\mu(e_{1})&~~~~\mbox{if } f=e_{r}\\
\mu(e_{k+1}) & ~~~~\mbox{if }f=e_{k}, ~k=1,\ldots,r-1\\
\mu(f) & ~~~~\mbox{if }f\not\in{\{e_{1},\ldots,e_{r}\}}.
\end{array}
\right.
$$
That is, the firms that do not belong to the cycle $\sigma'$ do not change the set of workers in both stable matchings ($\mu$ and $\mu[\sigma']$). Then, by Lemma \ref{ciclico disjuntos}, the cycle $\sigma$ is a subset of those firms that do not change. That is, $\sigma$ is a cycle of $P^{\mu[\sigma']}$.

\item Let $K=\{\sigma,\sigma'\}$.
Since $K$ is a set of cycles (not an ordered set), by Definition \ref{defino matching ciclico generico} and item 1., we have that
$$\mu[K]\left(  f\right)  =\left\{
\begin{array}
[c]{ll}%
\mu[\sigma](f)  &~~~~~~~~\mbox{if }f\in{\sigma}\\
\mu[\sigma'](f)  &~~~~~~~~\mbox{if }f\in{\sigma'}\\
\mu(f) &~~~~~~~~\mbox{otherwise.}  \\
\end{array}
\right.
$$ 

Then, $\mu[\sigma',\sigma]=\mu[\sigma,\sigma']$.

\end{enumerate}
\end{proof}

We say that a fractional matching $x$ \textbf{weakly dominates} a fractional matching $y$ with respect to the preference of the firm $f$, if for all workers $w$,
$$\sum_{j\succeq_{f}w}x_{f,j} \geq \sum_{j\succeq_{f}w}y_{f,j},$$ and it will be denoted by $x\succeq_{f}y$, using the same notation that is used for stable matchings. 

Similarly  $x$ \textbf{strongly dominates} $y$, denoted by $x\succ_{f}y$, if the previous inequality holds strictly for at least one worker $w$. Weak and strong domination under a worker's preferences are defined analogously. We say that $x\succeq_{F}y$ when $x\succeq_{f}y$ for all $f\in{F}$. The relation $x\succeq_{W}y$ is defined analogously.

\begin{lemma}\label{fuertemente al medio}
Let $(F,W,P,\boldsymbol{q})$ be a many-to-one matching market. Let $\bar{x}$ be a strongly stable fractional matching. Then, $x^{\mu_{F}}\succeq_{f}\bar{x}\succeq_{f}x^{\mu_{W}}$ for all $f\in{F}$ and  $x^{\mu_{W}}\succeq_{w}\bar{x}\succeq_{w}x^{\mu_{F}}$ for all $w\in{W}$.
\end{lemma}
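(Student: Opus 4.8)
The plan is to reduce the sandwiching to a comparison between each stable matching appearing in a convex decomposition of $\bar{x}$ and the two extreme stable matchings $\mu_{F}$ and $\mu_{W}$. By Theorem \ref{sucesion decreciente} there are stable matchings $\mu^{1},\ldots,\mu^{k}$ and weights $\alpha_{1},\ldots,\alpha_{k}>0$ with $\sum_{l}\alpha_{l}=1$ and $\bar{x}=\sum_{l=1}^{k}\alpha_{l}x^{\mu^{l}}$. Since the dominance relations $\succeq_{f}$ and $\succeq_{w}$ are defined by weak inequalities between the partial sums $\sum_{j\succeq_{f}w}(\cdot)_{f,j}$ and $\sum_{i\succeq_{w}f}(\cdot)_{i,w}$, they are preserved under convex combinations. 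Hence it suffices to prove, for an arbitrary stable matching $\mu$, that $x^{\mu_{F}}\succeq_{f}x^{\mu}\succeq_{f}x^{\mu_{W}}$ for all $f\in F$ and $x^{\mu_{W}}\succeq_{w}x^{\mu}\succeq_{w}x^{\mu_{F}}$ for all $w\in W$; averaging these bounds with the weights $\alpha_{l}$ over $\mu=\mu^{1},\ldots,\mu^{k}$ then delivers the statement for $\bar{x}$. Throughout I use that $\mu_{F}$ and $\mu_{W}$ are the firm- and worker-optimal stable matchings, so every stable $\mu$ lies between them in the firm order and in the worker order.

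The worker side is immediate because each worker occupies a single position. For any stable matching $\mu$, the quantity $\sum_{i\succeq_{w}f}x^{\mu}_{i,w}$ equals $1$ exactly when $\mu(w)\succeq_{w}f$ and equals $0$ otherwise. Worker-optimality of $\mu_{W}$ gives $\mu_{W}(w)\succeq_{w}\mu(w)$, so $\mu(w)\succeq_{w}f$ forces $\mu_{W}(w)\succeq_{w}f$; therefore $x^{\mu_{W}}\succeq_{w}x^{\mu}$ for every $w$. Symmetrically, $\mu_{F}$ is the worst stable matching for every worker, i.e. $\mu(w)\succeq_{w}\mu_{F}(w)$, which yields $x^{\mu}\succeq_{w}x^{\mu_{F}}$. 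This settles both worker-side inequalities for each $\mu^{l}$, hence for $\bar{x}$ after averaging.

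The firm side is the crux, and the main obstacle, because when $q_{f}>1$ the lattice comparison $\mu\succeq_{F}\mu'$ does not by itself force the sorted workforce of $\mu(f)$ to dominate that of $\mu'(f)$ worker-by-worker. I would isolate this as a sub-lemma: if $\mu\succeq_{F}\mu'$ are stable then $x^{\mu}\succeq_{f}x^{\mu'}$ for every $f$, and then apply it to the pairs $(\mu_{F},\mu)$ and $(\mu,\mu_{W})$. To prove the sub-lemma, fix $f$. If $f$ does not fill its quota, then $\mu(f)=\mu'(f)$ by the Rural Hospital Theorem and there is nothing to show, so assume $|\mu(f)|=|\mu'(f)|=q_{f}$. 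Suppose, toward a contradiction, that $x^{\mu}\succeq_{f}x^{\mu'}$ fails at some threshold worker $w_{0}$, that is, $\mu'(f)$ contains strictly more workers weakly preferred by $f$ to $w_{0}$ than $\mu(f)$ does. A counting argument then produces a worker $b\in\mu'(f)\setminus\mu(f)$ with $b\succeq_{f}w_{0}$, together with a worker $a\in\mu(f)$ with $b\succ_{f}a$ (such an $a$ exists since $\mu(f)$ has fewer than $q_{f}$ workers weakly preferred to $w_{0}$, hence at least one below it). Because $\mu$ is stable, $b\notin\mu(f)$, $|\mu(f)|=q_{f}$, and $f$ prefers $b$ to $a\in\mu(f)$, the pair $(f,b)$ cannot block $\mu$, which forces $\mu(b)\succeq_{b}f$. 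On the other hand $\mu'(b)=f$, and by the opposition of interests between the two sides (a standard fact for $\boldsymbol{q}$-responsive preferences: $\mu\succeq_{F}\mu'$ implies $\mu'\succeq_{W}\mu$) we obtain $f=\mu'(b)\succeq_{b}\mu(b)\succeq_{b}f$, whence $\mu(b)=f$, contradicting $b\notin\mu(f)$.

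This establishes the sub-lemma, so $x^{\mu_{F}}\succeq_{f}x^{\mu}\succeq_{f}x^{\mu_{W}}$ for every firm and every stable $\mu$; averaging over the decomposition of $\bar{x}$ completes the firm-side bounds. Note there is no circularity: strong stability of $\bar{x}$ enters only through Theorem \ref{sucesion decreciente}, whose proof relies on Lemmas \ref{mux stable} and \ref{para sus decre} and not on Lemma \ref{fuertemente al medio}; and the individual-matching comparisons use only the Rural Hospital Theorem, the lattice extremality of $\mu_{F},\mu_{W}$, and opposition of interests, all independent of the present lemma.
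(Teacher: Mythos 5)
Your proof is correct, and at the top level it follows the same route as the paper: decompose $\bar{x}$ via Theorem \ref{sucesion decreciente}, compare each $x^{\mu^{l}}$ with $x^{\mu_{F}}$ and $x^{\mu_{W}}$, and average. The genuine difference is at the firm-side step. The paper passes directly from $\mu_{F}\succeq_{F}\mu^{l}\succeq_{F}\mu_{W}$ to the partial-sum inequalities $\sum_{j\succeq_{f}w}x^{\mu_{F}}_{f,j}\geq\sum_{j\succeq_{f}w}x^{\mu^{l}}_{f,j}\geq\sum_{j\succeq_{f}w}x^{\mu_{W}}_{f,j}$ without comment, whereas you correctly identify that this implication is not automatic: for $q_{f}>1$, a $\boldsymbol{q}$-responsive set comparison $\mu(f)\succeq_{f}\mu'(f)$ does not by itself yield worker-by-worker dominance of the sorted workforces (responsiveness leaves the ranking of, say, $\{w_{1},w_{4}\}$ versus $\{w_{2},w_{3}\}$ undetermined, and whichever way it goes, one threshold count can fail). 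Your sub-lemma --- that for \emph{stable} $\mu\succeq_{F}\mu'$ the partial-sum dominance does hold, proved by extracting a worker $b\in\mu'(f)\setminus\mu(f)$ above the offending threshold and a worker $a\in\mu(f)$ below it, and then playing the no-blocking condition for $(f,b)$ against the polarization of interests --- is exactly the missing ingredient, and it uses only the Rural Hospital Theorem, extremality of $\mu_{F},\mu_{W}$, and Roth's polarization result, all independent of the present lemma, so there is no circularity. Your worker-side argument via the $0$--$1$ indicator $\sum_{i\succeq_{w}f}x^{\mu}_{i,w}$ is the same as what the paper relegates to ``a similar argument.'' In short: same decomposition strategy, but your version closes a real gap in the paper's own write-up rather than merely reproducing it.
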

\begin{proof}
Let $(F,W,P,\boldsymbol{q})$ be a many-to-one matching market. Let $\bar{x}$ be a strongly stable fractional matching. By Theorem \ref{sucesion decreciente}, there are stable matchings $\mu^1,\ldots,\mu^{k}$  and real numbers $\alpha_1,\ldots,\alpha_{k}$, such that $\bar{x}=\sum_{l=1}^{k}\alpha_{l}x^{\mu^{l}}$, with $0<\alpha_{l}\leq 1$, $\sum_{l=1}^{k}\alpha_{l}=1$  and $\mu ^{1}\succ_{F}\mu ^{2}\succ_{F}\ldots\succ_{F}\mu ^{k}$.
Since  $\mu_{F}\succeq_{F}\mu^{l}\succeq_{F}\mu_{W}$ for each $l=1,\ldots,k$, then for $f\in{F}$ we have that
$$\sum_{j\succeq_{f}w}x_{f,j}^{\mu_{F}}=\sum_{j\succeq_{f}w}\left(\sum_{l=1}^{k}\alpha_{l}x_{f,j}^{\mu_{F}}\right)=\sum_{l=1}^{k}\alpha_{l}\left(\sum_{j\succeq_{f}w}x_{f,j}^{\mu_{F}}\right)\geq$$
$$\sum_{l=1}^{k}\alpha_{l}\left(\sum_{j\succeq_{f}w}x_{f,j}^{\mu^{l}}\right)=\sum_{j\succeq_{f}w}\left(\sum_{l=1}^{k}\alpha_{l}x_{f,j}^{\mu^{l}}\right)=\sum_{j\succeq_{f}w}\bar{x}_{f,j},$$ for all $w\in{W}$. Then $x^{\mu_{F}}\succeq_{f}\bar{x}$.

To prove that $\bar{x}\succeq_{f}x^{\mu_{W}}$,$$\sum_{j\succeq_{f}w}\bar{x}_{f,j}=\sum_{j\succeq_{f}w}\left(\sum_{l=1}^{k}\alpha_{l}x_{f,j}^{\mu^{l}}\right)=\sum_{l=1}^{k}\alpha_{l}\left(\sum_{j\succeq_{f}w}x_{f,j}^{\mu^{l}}\right)\geq$$ 
$$\sum_{l=1}^{k}\alpha_{l}\left(\sum_{j\succeq_{f}w}x_{f,j}^{\mu_{W}}\right)=\sum_{j\succeq_{f}w}\left(\sum_{l=1}^{k}\alpha_{l}x_{f,j}^{\mu_{W}}\right)=\sum_{j\succeq_{f}w}x_{f,j}^{\mu_{W}},$$ for all $w\in{W}$. Then $\bar{x}\succeq_{f}x^{\mu_{W}}$.

A similar argument proves that $x^{\mu_{W}}\succeq_{w}\bar{x}\succeq_{w}x^{\mu_{F}}$.\end{proof}\\

\begin{lemma}\label{ss en reducido entonces ss en original}
Let $(F,W,P,\boldsymbol{q})$ be a many-to-one matching market. Let $\mu\in{S(P)}$, and $P^{\mu}$ the reduced preference profile. Let $\bar{x}$ be a stable fractional matching for a many-to-one matching market $(F,W,P,\boldsymbol{q})$. If $\bar{x}$ is a strongly stable fractional matching for a many-to-one matching market $(F,W,P^{\mu},\boldsymbol{q})$, then $\bar{x}$ is a strongly stable fractional matching for a many-to-one matching market $(F,W,P,\boldsymbol{q})$.
\end{lemma}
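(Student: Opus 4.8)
The plan is to verify the strong stability condition (\ref{condition SS(P)}) in the original profile $P$ for every acceptable pair $(f,w)\in A(P)$, by splitting $A(P)$ into the pairs that survive the reduction, $A(P^{\mu})$, and the pairs that are deleted, $A(P)\setminus A(P^{\mu})$. The observation that drives everything is that, since $\bar{x}$ is in particular a fractional matching for $(F,W,P^{\mu},\boldsymbol{q})$, it satisfies constraint (\ref{ecuacion 5}) for that market, so $supp(\bar{x})\subseteq A(P^{\mu})$; that is, $\bar{x}_{f,w}>0$ forces $w$ onto $f$'s reduced list and $f$ onto $w$'s reduced list. Because the reduction only deletes entries and never reorders them, for any threshold that itself lies on the reduced list the partial sums $\sum_{j\succeq_f w}\bar{x}_{f,j}$ and $\sum_{i\succeq_w f}\bar{x}_{i,w}$ computed under $P$ coincide with the ones computed under $P^{\mu}$.

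For a surviving pair $(f,w)\in A(P^{\mu})$ both thresholds $w$ and $f$ lie on the respective reduced lists, so by the previous remark the two factors of (\ref{condition SS(P)}) evaluated under $P$ agree with those evaluated under $P^{\mu}$; hence the condition holds under $P$ because, by hypothesis, it holds under $P^{\mu}$. The real work is therefore concentrated in the deleted pairs.

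For a deleted pair $(f,w)\in A(P)\setminus A(P^{\mu})$ we have $\bar{x}_{f,w}=0$, and I would show that one of the two factors of (\ref{condition SS(P)}) vanishes, according to why the pair was removed. Write $w_f^{+}$ for the most preferred worker of $\mu(f)$ (the top of $f$'s reduced list) and $w_f^{-}$ for the least worker of $\mu_W(f)$ (its bottom); there are four cases. (I) If $w\succ_f w_f^{+}$, then $supp(\bar{x})$ meets $f$ only below $w$, so $\sum_{j\succeq_f w}\bar{x}_{f,j}=0$; feeding this into inequality (\ref{ecuacion 3}) of the \emph{original} market, which $\bar{x}$ satisfies because it is a stable fractional matching in $P$, forces $\sum_{i\succeq_w f}\bar{x}_{i,w}=1$, so the second factor is zero. (II) If $w\prec_f w_f^{-}$, then $supp(\bar{x})$ meets $f$ only above $w$; when $f$ fills its quota, comparison with $x^{\mu_W}$ gives $\sum_{j\succeq_f w}\bar{x}_{f,j}=q_f$ and the first factor vanishes, whereas if $f$ does not fill its quota the stability of $\mu$ together with the Rural Hospital Theorem gives $\mu(w)\succ_w f$, which puts $f$ below $w$'s reduced range and reduces to case (IV). (III) If $w$ lies inside $f$'s range but $f\succ_w\mu_W(w)$, then $supp(\bar{x})$ meets $w$ only below $f$, so $\sum_{i\succeq_w f}\bar{x}_{i,w}=0$ and the second factor equals $1$; stability of $\mu_W$ forces $f$ to fill its quota with workers all preferred to $w$, and comparison with $x^{\mu_W}$ at the threshold $w_f^{-}$ then yields $\sum_{j\succeq_f w}\bar{x}_{f,j}=q_f$, so the first factor vanishes. (IV) If $f\prec_w\mu(w)$, then $supp(\bar{x})$ meets $w$ only above $f$, so $\sum_{i\succeq_w f}\bar{x}_{i,w}=\sum_i\bar{x}_{i,w}$, and comparison with $x^{\mu}$ shows this equals $1$, making the second factor vanish. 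Since a deleted pair with $w$ inside $f$'s range must have $f$ outside $w$'s range, cases (I)--(IV) are exhaustive.

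The domination inequalities used in (II)--(IV), namely $x^{\mu}\succeq_f\bar{x}\succeq_f x^{\mu_W}$ and $x^{\mu_W}\succeq_w\bar{x}\succeq_w x^{\mu}$, come from applying Lemma \ref{fuertemente al medio} in the reduced market $(F,W,P^{\mu},\boldsymbol{q})$, whose firm- and worker-optimal stable matchings are $\mu$ and $\mu_W$ and in which $\bar{x}$ \emph{is} assumed strongly stable, so there is no circularity. The main obstacle I anticipate is exactly the bookkeeping of this case analysis: ensuring that in each case the correct factor is the one that collapses, that every domination threshold ($w_f^{-}$, the least worker of $\mu_W(f)$, the bottom $\mu(w)$ of $w$'s reduced list) genuinely lies on the pertinent reduced list so that the reduced-market inequalities actually apply, and in particular handling the ``firm does not fill its quota'' subcase of (II) by routing it through (IV) via the Rural Hospital Theorem.
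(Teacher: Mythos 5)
Your proof is correct, and it follows the same overall strategy as the paper's: split $A(P)$ into the surviving pairs $A(P^{\mu})$ and the deleted pairs, and for the deleted pairs combine the domination inequalities of Lemma \ref{fuertemente al medio} (applied in the reduced market, where $\mu$ and $\mu_{W}$ are the firm- and worker-optimal stable matchings) with the stability of $\mu$ and $\mu_{W}$. The details differ in ways worth recording. For surviving pairs the paper uses a squeeze argument (each reduced factor dominates the corresponding original factor and both are nonnegative), whereas you obtain literal equality of the partial sums from $supp(\bar{x})\subseteq A(P^{\mu})$; both work, and your support-containment observation also carries the weight in several deleted cases. For deleted pairs the paper stratifies by the position of $w$ in $f$'s list (its cases 2.i, 2.ii.a, 2.ii.b$_1$--b$_3$), while you stratify by which deletion rule fired: your case (I) is the paper's 2.i, your case (II) with the quota filled is the paper's b$_3$, and your case (IV) absorbs the paper's a, b$_1$ and b$_2$. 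The real payoff of your organization is the ``firm does not fill its quota'' situation: the paper re-invokes Theorem \ref{sucesion decreciente} inside the reduced market and argues via a blocking-pair contradiction, whereas you note that stability of $\mu$ together with the Rural Hospital Theorem gives $\mu(w)\succ_{w}f$ directly, reducing everything to the one-line domination argument of case (IV) --- a genuine simplification. Two minor remarks: your case (III) is actually vacuous, since $f\succ_{w}\mu_{W}(w)$ together with stability of $\mu_{W}$ already forces $w\prec_{f}w_{f}^{-}$, i.e.\ case (II) (your treatment of it is nonetheless sound); and you are right that the domination inequalities, which live at reduced-list thresholds, must be transported to the deleted threshold via the support containment --- the paper silently elides this step, so your handling is, if anything, the more careful one.
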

\begin{proof}
Let $(F,W,P,\boldsymbol{q})$ be a many-to-one matching market. Let $\mu\in{S(P)}$, and $P^{\mu}$ be the reduced preference profile. Let $\bar{x}$ be a stable fractional matching for a many-to-one matching market $(F,W,P,\boldsymbol{q})$. Let $\bar{x}$ be a strongly stable fractional matching for a many-to-one matching market $(F,W,P^{\mu},\boldsymbol{q})$, that is:
$$
\left[  q_{f}-\sum_{j\geq^{\mu}_{f}w}\bar{x}_{f,j}\right]  \cdot\left[  1-\sum_{i\geq^{\mu}_{w}f}\bar{x}_{i,w}\right]  =0,
$$
for all $(f,w)\in{A(P^{\mu})}$.

We need to prove that, for all $(f,w)\in{A(P)}$, $\bar{x}$ fulfils
$$
\left[  q_{f}-\sum_{j\succeq_{f}w}\bar{x}_{f,j}\right]  \cdot\left[  1-\sum_{i\succeq_{w}f}\bar{x}_{i,w}\right]  =0.
$$

We consider the following two cases:  
\begin{enumerate}
\item Let $(f,w)\in{A(P^{\mu})}$. That is, $(f,w)$ was not eliminated in $P^{\mu}$. So,
$$\sum_{j\succeq_{f}w}\bar{x}_{f,j}\geq \sum_{j\geq^{\mu}_{f}w}\bar{x}_{f,j}$$
holds, since for each firm $f$, there are more workers in the original preference list than in the reduced preference list.

Hence,$$q_{f}-\sum_{j\succeq_{f}w}\bar{x}_{f,j}\leq q_{f}-\sum_{j\geq^{\mu}_{f}w}\bar{x}_{f,j}.$$

With a similar argument we have that 
$$1-\sum_{i\succeq_{w}f}\bar{x}_{i,w}\leq 1-\sum_{i\geq^{\mu}_{w}f}\bar{x}_{i,w}.$$

By hypothesis, and linear inequalities (1) and (2) of $PC$,
$$
0=\left[  q_{f}-\sum_{j\geq^{\mu}_{f}w}\bar{x}_{f,j}\right]  \cdot\left[  1-\sum_{i\geq^{\mu}_{w}f}\bar{x}_{i,w}\right]\geq 
 \left[  q_{f}-\sum_{j\succeq_{f}w}\bar{x}_{f,j}\right]  \cdot\left[  1-\sum_{i\succeq_{w}f}\bar{x}_{i,w}\right]\geq 0.
$$
Then, for $(f,w)\in{A(P^{\mu})}$, we have that 
$$\left[  q_{f}-\sum_{j\succeq_{f}w}\bar{x}_{f,j}\right]  \cdot\left[  1-\sum_{i\succeq_{w}f}\bar{x}_{i,w}\right]=0.$$

\item Let $(f,w)\in{A(P)-A(P^{\mu})}$. Let $w_1 \in \mu(f)$ such that $w_1 \succeq_f w'$ for each $w'\in \mu(f)$. Notice that $w\neq w_1$. Then, we analyse two sub-cases:

\begin{enumerate}
\item[i)]$w\succ_{f}w_{1}$.

We have that $x^{\mu}\succeq_{F}\bar{x}$, then
\begin{equation}\label{ecuacion 4.2}
\sum_{j\succ_{f}w}\bar{x}_{f,j}\leq\sum_{j\succ_{f}w}x_{f,j}^{\mu}=0.
\end{equation}
Since, $\bar{x}$ is a stable fractional matching, $\bar{x}$ satisfy inequality (\ref{ecuacion 3}) of $SPC$, i.e.
$$
\sum_{j\succ_{f}w}\bar{x}_{f,j}+q_{f}\sum_{i\succ_{w}f}\bar{x}_{i,w}+q_{f}\bar{x}_{f,w}\geq q_{f}.
$$
Then, by condition (\ref{ecuacion 4.2}) $$q_{f}\sum_{i\succeq_{w}f}\bar{x}_{i,w}\geq q_{f},$$ and for all $i\succeq_{w}f$, $\bar{x}_{i,w}=1$. Hence, $$\sum_{i\succeq_{w}f}\bar{x}_{i,w}\geq 1,$$ and by linear inequality (2), we have that $$\sum_{i\succeq_{w}f}\bar{x}_{i,w}= 1,$$ then we have that $$\left[  q_{f}-\sum_{j\succeq_{f}w}\bar{x}_{f,j}\right]  \cdot\left[  1-\sum_{i\succeq_{w}f}\bar{x}_{i,w}\right]=0.$$

\item[ii)]$w_{1}\succ_{f} w$. We analyse two sub-cases, if the firm $f$ does or does not fill its quota.
\begin{enumerate}
\item[a)] If the firm $f$ does not fill its quota, by Theorem \textcolor{blue}{RHT}, the firm $f$ is assigned to the same set of workers in every stable matching. Assume that $\mu(f)=\{w_{1},\ldots,w_{p} \}$ with $p<q_{f}$. 
Recall that  $w_{1}\succ_{f} w$, then 
$$
0<\sum_{j\succ_{f}w}\bar{x}_{f,j}<q_{f}.
$$

Assume that $0<\sum_{i\succeq_{w}f}\bar{x}_{i,w}<1$, then $\sum_{i\prec_{w}f}\bar{x}_{i,w}>0$. Since $\bar{x}$ is a strongly stable fractional matching for the reduced preference profile $P^\mu$, then by Theorem \ref{sucesion decreciente} there are stable matchings $\mu^1,\ldots,\mu^{k}$ in $(F,W,P^{\mu},\boldsymbol{q})$ and real numbers $\alpha_1,\ldots,\alpha_{k}$, such that $\bar{x}=\sum_{l=1}^{k}\alpha_{l}x^{\mu^{l}}$, with $0<\alpha_{l}\leq 1$, $\sum_{l=1}^{k}\alpha_{l}=1$  and $\mu ^{1}\succ_{F}\mu ^{2}\succ_{F}\ldots\succ_{F}\mu ^{k}$. Since $\sum_{i\prec_{w}f}\bar{x}_{i,w}>0$, then there is a stable matching $\mu^{l}$ for some $l=1,\ldots,k$ such that $\sum_{i\prec_{w}f}x^{\mu^{l}}_{i,w}=1$. Given that $(f,w)\in{A(P)}$, the firm $f$ does not fill its quota, and  $\sum_{i\prec_{w}f}x^{\mu^{l}}_{i,w}=1$, then $\mu^{l}(w)\prec_{w}f$. Hence, $(f,w)$ is a blocking pair for $\mu^{l}$ for some $l=1,\ldots,k,$ and this is a contradiction, then $\sum_{i\succeq_{w}f}\bar{x}_{i,w}=1$. Therefore, $\bar{x}$ fulfils condition (\ref{condition SS(P)}).

\item[b)] If the firm $f$ fill its quota. Without loss of generality, we assume that $\mu(f)=\{w_{1},\ldots,w_{q_f}\}$, $\mu_{W}(f)=\{w'_{1},\ldots,w'_{q_f}\}$, $w_{l}\succ_{f}w_{l+1}$ and $w'_{l}\succ_{f}w'_{l+1}$ for $l=1,\ldots,q_{f}-1$. Notice that, $\mu(f)\cap \mu_{W}(f)$ is not necessarily empty.

We analyse the following 3 sub-cases:
\begin{enumerate}

\item[b$_{1}$)]
$w_{1}\succ_{f} w \succ_{f}w_{q_f}$.

Hence, $\sum_{j\succeq_{f}w}x^{\mu}_{f,j}< q_{f}.$ Since each stable matching fulfils condition (\ref{condition SS(P)}), then $\sum_{i\succeq_{w}f}x^{\mu}_{i,w}=1.$

Since $\bar{x}$ is a stable fractional matching in $P^\mu$, then $x^{\mu}\succeq_{F}\bar{x}$. By Lemma \ref{fuertemente al medio}, we have that $\bar{x}\succeq_{W}x^{\mu}$. Thus,
$$1=\sum_{i\succeq_{w}f}x^{\mu}_{i,w}\leq\sum_{i\succeq_{w}f}\bar{x}_{i,w}\leq 1.$$ Hence, $$\sum_{i\succeq_{w}f}\bar{x}_{i,w}= 1,$$ which implies that $$\left[q_{f}-\sum_{j\succeq_{f}w}\bar{x}_{f,j}\right]\cdot\left[1-\sum_{i\succeq_{w}f}\bar{x}_{i,w}\right]=0.$$
\item[b$_{2}$)]
$w_{q_f}\succ_{f}w\succ_{f}w'_{q_f}$.

Since  $(f,w)\not\in{A(P^{\mu})}$, then $f$ was eliminated from the worker $w$'s preference list $P^{\mu}$. Then, for the worker $w$ we have that $f\succ_{w}\mu_{W}(w)$ or $\mu(w)\succ_{w}f$.
If $f\succ_{w}\mu_{W}$, then the pair $(f,w)$ blocks the matching $\mu_{W}$, then $\mu(w)\succ_{w}f$. Therefore, by Lemma (\ref{fuertemente al medio}) we have that $\bar{x}\succeq_{W}x^{\mu}$. Hence, 
$$
\sum_{i\succeq_{w}f}\bar{x}_{i,w}\geq\sum_{i\succeq_{w}f}x^{\mu}_{i,w}=x^{\mu}_{\mu(w),w}=1.
$$
Since $\bar{x}$ satisfies linear inequality (\ref{ecuacion 2}), we have that $\sum_{i\succeq_{w}f}\bar{x}_{i,w}=1$. Then,
$$
\left[  q_{f}-\sum_{j\succeq_{f}w}\bar{x}_{f,j}\right]  \cdot\left[  1-\sum_{i\succeq_{w}f}\bar{x}_{i,w}\right]=0.
$$
\item[b$_{3}$)]
$w'_{q_f}\succ_{f}w$.

By Lemma (\ref{fuertemente al medio}) we have $\bar{x}\succeq_{F}x^{\mu_{W}}$. Hence,
$$\sum_{j\succeq_{f}w}\bar{x}_{f,j}\geq\sum_{j\succeq_{f}w}x^{\mu_{W}}_{f,j}=q_{f}.$$
Since $\bar{x}$ satisfies linear inequality (\ref{ecuacion 1}), we have that

 $\sum_{j\succeq_{f}w}\bar{x}_{f,j}=q_{f}$. Then,
$$\left[  q_{f}-\sum_{j\succeq_{f}w}\bar{x}_{f,j}\right]  \cdot\left[  1-\sum_{i\succeq_{w}f}\bar{x}_{i,w}\right]=0.$$

\end{enumerate}
\end{enumerate}
\end{enumerate}
\end{enumerate}
From cases 1 and 2, we conclude that $\bar{x}$ is a strongly stable fractional matching for the many-to-one matching market $(F,W,P,\boldsymbol{q})$.\end{proof}

\newpage

\end{document}